\documentclass[a4paper,11pt,fleqn]{article}

\usepackage[T1]{fontenc}
\usepackage[utf8]{inputenc}
\usepackage[english]{babel}
\usepackage[margin=23mm,headheight=14pt,headsep=7mm,footskip=11mm]{geometry}
\usepackage{newtxtext}
\usepackage{amsmath,amssymb,amsthm}
\usepackage{newtxmath}
\usepackage{microtype}
\usepackage{graphicx}
\usepackage{booktabs,makecell,multirow,array,colortbl,dcolumn}
\usepackage{etoolbox,xspace,xstring}
\usepackage{xcolor}
\usepackage{bm}
\usepackage{float}
\usepackage{comment}
\usepackage{enumerate}
\usepackage{dsfont}
\usepackage{algorithm}
\usepackage{algpseudocode}
\usepackage[authoryear,longnamesfirst]{natbib}
\usepackage{caption}
\usepackage{fancyhdr}
\usepackage[hidelinks]{hyperref}
\hypersetup{
    pdftitle={Regression models for ordinal compositional data},
    pdfauthor={Beniamino Cappelletti-Montano, Monica Musio, Nicola Piras}
}

\fancypagestyle{plain}{%
  \fancyhf{}%
  \fancyfoot[C]{\thepage}%
}

\AtBeginEnvironment{table}{\small}

\theoremstyle{plain}

\newtheorem{theorem}{Theorem}[section]
\newtheorem{lemma}[theorem]{Lemma}
\newtheorem{proposition}{Proposition}
\newtheorem{remark}{Remark}
\newtheorem{corollary}{Corollary}
\theoremstyle{definition}
\newtheorem{definition}[theorem]{Definition}
\newtheorem*{example}{Example}

\algnewcommand{\Inputs}[1]{%
  \State \textbf{Inputs:}
  \Statex \hspace*{\algorithmicindent}\parbox[t]{.8\linewidth}{\raggedright #1}
}
\algnewcommand{\Initialize}[1]{%
  \State \textbf{Initialize:}
  \Statex \hspace*{\algorithmicindent}\parbox[t]{.8\linewidth}{\raggedright #1}
}

\def\tsc#1{\csdef{#1}{\textsc{\lowercase{#1}}\xspace}}
\tsc{WGM}
\tsc{QE}

\newenvironment{keywords}{%
  \par\medskip\begingroup\small\noindent\textbf{Keywords:}\enspace
}{\par\endgroup\medskip}
\newcommand{\sep}{\unskip;\ }

\makeatletter
\renewcommand{\@maketitle}{%
  \newpage\null\vskip 0.3em
  \begin{center}%
    {\LARGE\bfseries \@title\par}%
    \vskip 1.1em
    {\normalsize \@author\par}%
    \vskip 0.7em
    {\small
      Department of Mathematics and Computer Science,\\
      University of Cagliari, via Ospedale 72, Cagliari, 09124, Italy\par}%
    \vskip 0.5em
    {\footnotesize
      \href{mailto:b.cappellettimontano@unica.it}{\nolinkurl{b.cappellettimontano@unica.it}}\quad
      \href{mailto:mmusio@unica.it}{\nolinkurl{mmusio@unica.it}}\quad
      \href{mailto:nicola.piras97@unica.it}{\nolinkurl{nicola.piras97@unica.it}}\par}%
  \end{center}%
  \vskip 0.6em
}
\makeatother

\title{Regression models for ordinal compositional data}
\author{Beniamino Cappelletti-Montano,\quad Monica Musio,\quad
        Nicola Piras}
\date{}

\begin{document}
\maketitle

\begin{abstract}
We present a unified, transformation-free linear regression framework tailored for ordinal compositional data, such as distributions across educational levels or aggregate Likert-scale survey responses. Traditional log-ratio approaches often obscure interpretability, ignore the inherent ordering of categories, and struggle with boundary zero values. To overcome these limitations, we propose a deterministic model constrained to the space of column-stochastic transformation matrices, naturally preserving the fundamental geometry of the simplex. By adopting the weighted 1-Wasserstein distance as the loss function, our method directly embeds the ordinal nature of the data into the estimation process. We provide a computationally efficient and globally optimal solution explicitly formulated as a Linear Programming (LP) problem. The framework systematically addresses four distinct regression scenarios, introducing a novel ordinal tensor product to rigorously handle interactions across multiple compositional predictors. Furthermore, we equip the model with a suitable regularization strategy and novel diagnostic tools, including a Wasserstein-based coefficient of determination ($R^2_W$) and an Order Preservation Index (OPI). Extensive simulation studies and two empirical applications demonstrate the robust predictive performance and interpretability of the proposed methodology.
\end{abstract}

\begin{keywords}
Simplex \sep ordinal data \sep compositional data \sep regression model \sep Wasserstein distance
\end{keywords}

\section{Introduction}\label{sec1}

Compositional data, defined as vectors of non-negative proportions summing to a constant, are ubiquitous across scientific disciplines \citep{Aitchison1}. Typical examples include the mineralogical composition of rocks in geology, chemical concentrations in laboratory samples, or the allocation of household expenditures and time use in economics and social sciences. While regression models for compositional data have been extensively studied (see, e.g., \cite{Chen, Li}), they often rely on log-ratio transformations---such as the Additive Log-Ratio (ALR) or the Isometric Log-Ratio (ILR)---to map the constrained simplex into an unconstrained Euclidean space \citep{Aitchison2, Aitchison3}. Despite their popularity, these transformations present significant disadvantages: they can obscure the interpretability of model parameters, and they generally cannot handle zero values, which are frequent in empirical datasets. Moreover, the underlying Aitchison geometry is inherently nominal. It treats categories as exchangeable, making it unsuitable for ordinal compositional data where the relative position of categories carries essential information. For instance, in a Likert scale composition, the ``distance'' between \textit{Satisfied} and \textit{Very Satisfied} should be conceptually smaller than that between \textit{Unsatisfied} and \textit{Very Satisfied}. Standard log-ratio approaches fail to ``sense'' this proximity.

The need to model ordered compositions arises in various fields \citep{Weiss}. In social sciences, data often represent distributions across educational levels (e.g., high school graduation, bachelor's, master's, and PhD); in medicine and psychology, compositions frequently derive from questionnaires evaluating opinions or service quality on ordinal scales. These scenarios involve ordinal data \citep{Agresti}, where categories follow a natural progression. While traditional methods for analyzing ordinal responses are typically based on Generalized Linear Models (GLMs) and their extensions \citep{McCullagh2}, these models are primarily designed to handle discrete categorical outcomes at the individual level. Extending them to model continuous, aggregate compositional vectors while rigorously preserving the unit-sum constraint and the geometry of the simplex remains methodologically cumbersome. Motivated by the limitations of transformation-based methods and the growing interest in transformation-free models that operate directly on the simplex \citep{Fiksel}, we develop a unified regression framework for mappings of the form:
\begin{equation}\label{modello}
f:\Delta_{n_1}\times\cdots\times\Delta_{n_p}\longrightarrow    \Delta_{m_1}\times\cdots\times\Delta_{m_q}
\end{equation}
where a product of compositional predictors is mapped to a product of ordinal compositional outcomes. Our approach is distinguished by the use of the weighted 1-Wasserstein distance as a loss function. By treating each composition as a discrete probability distribution on an ordered set, the Wasserstein metric naturally incorporates the ordinal structure into the estimation process. To ensure structural consistency, the regression mappings are constrained to be column-stochastic matrices, naturally preserving the compositional nature of the output without requiring data transformations. Furthermore, by exploiting the $L_1$-nature of the Wasserstein distance, we provide a deterministic estimation strategy formulated exactly as a Linear Programming (LP) problem. This ensures computational tractability and guarantees global optimality.
Investigating the general model in \eqref{modello} requires addressing four fundamental regression scenarios:
\begin{enumerate}
    \item[(I)] a single compositional predictor to a single compositional outcome: $f: \Delta_n \longrightarrow \Delta_m$;
    \item[(II)] a single compositional predictor to a product of compositional outcomes: $f: \Delta_n \longrightarrow \Delta_m \times \Delta_k$;
    \item[(III)] a product of compositional predictors to a single compositional outcome: $f: \Delta_n \times \Delta_h \longrightarrow \Delta_m$;
    \item[(IV)] a product of compositional predictors to a product of compositional outcomes: $f: \Delta_n \times \Delta_h \longrightarrow \Delta_m \times \Delta_k$.
\end{enumerate}

To handle interaction terms across multiple input dimensions (Cases III and IV), we introduce a novel ordinal tensor product construction that embeds the joint space into a higher-dimensional simplex while strictly preserving the underlying metric geometry. In our framework, we assume the response variables to be ordinal, directly embedding this feature via the Wasserstein distance. Conversely, the model does not impose ordinal constraints on the explanatory variables, which may be ordinal or nominal, ensuring broad flexibility in empirical applications. By bypassing log-ratio transformations and leveraging the LP formulation, our proposed method remains perfectly robust in the presence of zeros and provides a highly interpretable geometric mapping between compositions. We also embed a natural regularization strategy to enforce smooth transitions between adjacent categories, and we propose novel goodness-of-fit ($R^2_W$) and order-preservation (\textrm{OPI}) indices explicitly tailored to the Wasserstein geometry.

The paper is structured as follows: Section \ref{sec2} defines the mathematical notation and the geometry of the Wasserstein order; Section \ref{sec3} presents the proposed regression framework, including the LP formulation, regularized estimation, and goodness-of-fit measures; Section \ref{sec4} reports the results of simulation studies assessing the predictive performance and robustness of the model; Section \ref{sec5} presents an application to European educational attainment data; Section \ref{sec6} analyzes a survey dataset on urban life satisfaction; finally, Section \ref{sec7} concludes the paper. Supplemental material and mathematical proofs are also provided in the Appendices.

\section{Ordinal structure and Wasserstein geometry on the simplex}\label{sec2}
In all the following sections, we denote the $n$-dimensional standard simplex by
\begin{equation*}
\Delta_{n}:=\left\{\left(x_{1}, \ldots, x_{n+1}\right)\in\mathbb{R}^{n+1} : x_i\geq 0 \textrm{ for all } i=1,\ldots,n+1, \textrm{ and }   x_{1} + \cdots + x_{n+1}=1 \right\}.
\end{equation*}
Usually, any element of $\Delta_n$ is called a \emph{composition}. In many situations, there is a natural ordering among the vertices of the simplex. For example, in a customer satisfaction survey with three categories—``satisfied'', ``neutral'', and ``dissatisfied''—each point of the 2-dimensional simplex represents a possible outcome of the survey. We can associate the vertex $V_1=(1,0,0)$ with the ``best result'', where 100\% of customers report being satisfied with the service; the vertex $V_2=(0,1,0)$ with the ``intermediate result'', where 100\% of customers report being neutral; and the vertex $V_3=(0,0,1)$ with the ``worst'' outcome, where all customers report being dissatisfied.  In these situations, whenever we are dealing with ordinal variables, we adopt the convention of ordering the vertices of the simplex associating the vertex $V_1=(1,0,\ldots,0)$ to the ``best'' outcome and so on until the vertex $V_{n+1}=(0,0,...,1)$ associated to the ``worst'' outcome.
The aim of this section is to extend this natural order to a total order relation defined for all the points of $\Delta_n$. We start by introducing a distance on $\Delta_n$, which we shall use through all the article. Let $\mathbf{a}=(a_{1}, \ldots, a_{n})$ be a $n$-uple of positive real numbers. The \emph{weighted $1$-Wasserstein distance} with weights $a_{1}, \ldots, a_{n}$, is  the function $W_{1, \mathbf{a}}: \Delta_n \times \Delta_n \longrightarrow [0, +\infty)$ defined by
\begin{equation}\label{wass}
W_{1, \mathbf{a}}(P,Q)=\sum_{i=1}^{n} a_{i}\left| F_{i}(P) - F_{i}(Q) \right| 
\end{equation}
where $F_{i}(P)=F(x_{1},\ldots,x_{n+1})=x_{1}+\cdots+x_{i}$ denotes the $i$-th component of the cumulative distribution function of $P\in\Delta_{n}$ (see \cite{CMontano2}). This definition stems from the Optimal Transport theory (see \cite{Villani}). Indeed  \eqref{wass} corresponds to the classical  $1$-Wasserstein distance when the underlying set of categories is equipped with the metric $d(i,j)=\sum_{k=i}^{j-1} a_k$,  for any $i <j$. From this point of view, the constants $a_i$ quantify the cost for moving
from a category to the one immediately higher. Their choice can be dictated by domain-specific knowledge, where the ``semantic gap'' between categories is non-uniform and the weights are calibrated to reflect the relative importance or the perceived ``effort'' associated with transitioning between states (see, e.g., \cite{CMontano1}, page 10). 
When such efforts are considered equivalent (i.e., the ground metric is the standard distance with $a_i=1$), we recover the classical discrete $1$-Wasserstein distance. 

In the context of ordinal compositional data, a well-known partial order relation is given by the \emph{first-order stochastic dominance} (FSD) \citep{Hadar}. Given $P,Q\in\Delta_n$, one says that $P$ \emph{stochastically dominates} $Q$ (in symbols, $Q  \preceq_{FSD} P$) if, by definition, $F_{i}(Q)\leq F_{i}(P)$ for each $i\in\left\{1,\ldots,n\right\}$. The dominance is \emph{strict}, and we write   $Q  \prec_{FSD} P$ if $F_{i}(Q)\leq F_{i}(P)$ for each $i\in\left\{1,\ldots,n\right\}$ and there exists $j\in\left\{1,\ldots,n\right\}$ such that $F_{j}(Q) < F_{j}(P)$.
The stochastic dominance is not a \emph{total} order relation, since there may be points  $P,Q \in \Delta_n$ such that $Q  \npreceq_{FSD} P$ and $P  \npreceq_{FSD} Q$. 
The $1$-Wasserstein distance provides a tool for defining a total order relation on $\Delta_n$, related to the first-order stochastic dominance, with respect to which any two points in $\Delta_n$ are comparable. Namely, we follow \cite{CMontano1}, where the authors associate to any $P=(x_1,\ldots,x_{n+1})\in\Delta_n$ an indicator which quantifies how far $P$ is from the best outcome $V_1$ by
\begin{equation}\label{delta_1}
\delta_1(P)=W_{1, \mathbf{a}}(P,V_1)= \sum_{j=1}^{n+1} c_j x_j
\end{equation}
where $c_1 = 0$ and $c_j = \sum_{k=1}^{j-1} a_k$ for $j > 1$  represents the cumulative distance of category $j$ from the best category. In particular, if $a_i = 1$ for all $i\in\left\{1,\ldots,n\right\}$, $\delta_1 (P) = \sum_{j=1}^{n+1} (j-1) x_j = \mathbb{E}[P] - 1$. 
We extend this construction by associating to a point $P\in\Delta_n$ a \emph{Wasserstein vector}
$\mathbf{W}(P)=\left(\delta_{1}(P), \delta_{2}(P), \ldots, \delta_{n+1}(P) \right)$,
where for each $i\in\left\{1,\ldots,n+1\right\}$, 
$\delta_{i}(P)=W_{1, \mathbf{a}}(P,V_i)$ is the distance from $P$ to the $i$-th vertex of $\Delta_n$.  Then we can define a new order relation on $\Delta_n$ as follows.
\begin{definition}
The \emph{Wasserstein order} is the relation $\preceq_{W}$ on $\Delta_n$ defined by
\begin{equation*}
  Q  \preceq_{W} P  \iff \mathbf{W}(P)\leq_{lex}\mathbf{W}(Q)
\end{equation*}
for any $P,Q\in\Delta_n$, where $\leq_{lex}$ denotes the lexicographical order.
\end{definition}
Thus, one first look at the distance of $P$ and $Q$ from $V_1$. If $\delta_1(P)<\delta_1(Q)$ then $Q\preceq_{W} P$. If  $\delta_{1}(P)=\delta_{1}(Q)$ then one examines the distance from $V_2$: if $\delta_{2}(P)<\delta_{2}(Q)$ then $Q\preceq_{W} P$. If also the distance from $V_2$ is equal, then one moves to the distance from $V_3$ and so on.

\begin{lemma}\label{ordine-totale}
The Wasserstein order $\preceq_{W}$ is a total order relation on $\Delta_n$. Specifically, for any $P, Q, Z \in \Delta_n$, the following properties hold:
\begin{enumerate}[(i)]
    \item \textit{Reflexivity:} $P \preceq_{W} P$;
    \item \textit{Transitivity:} If $P \preceq_{W} Q$ and $Q \preceq_{W} Z$, then $P \preceq_{W} Z$;
    \item \textit{Antisymmetry:} If $P \preceq_{W} Q$ and $Q \preceq_{W} P$, then $P = Q$;
    \item \textit{Totality:} either $P \preceq_{W} Q$ or $Q \preceq_{W} P$.
\end{enumerate}
\end{lemma}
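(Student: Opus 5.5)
The plan is to reduce everything to standard facts about the lexicographic order on $\mathbb{R}^{n+1}$ and to add one genuinely new ingredient: the map $P\mapsto \mathbf{W}(P)$ is injective on $\Delta_n$. By definition, $Q\preceq_W P$ holds exactly when $\mathbf{W}(P)\leq_{lex}\mathbf{W}(Q)$, so $\preceq_W$ is the pullback of the reversed lexicographic order through $\mathbf{W}$. The relation $\leq_{lex}$ is a total order on $\mathbb{R}^{n+1}$. It is reflexive, and it is transitive: compare the first index where the vectors differ, splitting on which of the two first-difference indices is smaller. It is antisymmetric and total, since two vectors either coincide or have a first differing coordinate that decides the comparison. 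Reversing a total order yields a total order. Hence reflexivity (i), transitivity (ii) and totality (iv) for $\preceq_W$ transfer immediately. Antisymmetry transfers only in the weak form: $P\preceq_W Q$ and $Q\preceq_W P$ give $\mathbf{W}(P)=\mathbf{W}(Q)$.

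The main step, and the only real obstacle, is therefore to prove that $\mathbf{W}(P)=\mathbf{W}(Q)$ forces $P=Q$. In fact the first coordinates alone should suffice, so I will use only $\delta_1$ and $\delta_{n+1}$, or more systematically all the $\delta_i$. I compute $\delta_i(P)$ explicitly from \eqref{wass}. The cumulative distribution of the vertex $V_i$ is $F_k(V_i)=0$ for $k<i$ and $F_k(V_i)=1$ for $k\ge i$. Since $0\le F_k(P)\le 1$, the absolute values can be removed:
\[
\delta_i(P)=\sum_{k<i} a_k F_k(P)+\sum_{k\ge i} a_k\bigl(1-F_k(P)\bigr).
\]
Taking differences of consecutive coordinates gives
\[
\delta_{i+1}(P)-\delta_i(P)=a_i F_i(P)-a_i\bigl(1-F_i(P)\bigr)=a_i\bigl(2F_i(P)-1\bigr),
\]
for $i=1,\ldots,n$.

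Because every $a_i>0$, equality of the Wasserstein vectors yields $F_i(P)=F_i(Q)$ for all $i=1,\ldots,n$. Since $F_{n+1}=1$ for both points, the successive differences of the cumulative sums give $x_j=F_j-F_{j-1}$, and therefore $P=Q$. This proves (iii).

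Finally I will assemble (i)–(iv). For clarity I may write out transitivity of $\leq_{lex}$ briefly via the first-difference index, noting that the positivity of the weights $a_i$ is used only in the injectivity step.
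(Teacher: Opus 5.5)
Your argument is correct and follows the paper's route: (i), (ii) and (iv) are inherited from $\le_{lex}$, and (iii) is reduced to injectivity of $P\mapsto\mathbf{W}(P)$. Where the paper only asserts that a point of the simplex is determined by its distances to the vertices, you actually prove it via $\delta_{i+1}(P)-\delta_i(P)=a_i\bigl(2F_i(P)-1\bigr)$, which is a genuine improvement. You should delete the aside that $\delta_1$ and $\delta_{n+1}$ alone suffice: $\delta_1(P)+\delta_{n+1}(P)=\sum_{k=1}^n a_k$ for every $P$, so these two coordinates carry only one scalar of information and cannot determine $P$ once $n\ge 2$. It is your use of all consecutive differences that does the work.
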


\begin{proof}
The properties (i) and (ii) of $\preceq_{W}$ are fundamentally inherited from the properties of the lexicographical order $\le_{lex}$ on $\mathbb{R}^{n+1}$. \ As regards (iii), suppose $P \preceq_{W} Q$ and $Q \preceq_{W} P$. This implies $\mathbf{W}(Q) \le_{lex} \mathbf{W}(P)$ and $\mathbf{W}(P) \le_{lex} \mathbf{W}(Q)$. The antisymmetry of the lexicographical order yields $\mathbf{W}(P) = \mathbf{W}(Q)$, and consequently $\delta_i(P) = \delta_i(Q)$ for all $i \in \{1, \dots, n+1\}$.  Since a point in the simplex is uniquely determined by its distances to the vertices, it follows that $P = Q$. \ Finally, since  $\le_{lex}$ is a total order on $\mathbb{R}^{n+1}$, for any two vectors $\mathbf{W}(P)$ and $\mathbf{W}(Q)$, we must have either $\mathbf{W}(P) \le_{lex} \mathbf{W}(Q)$ or $\mathbf{W}(Q) \le_{lex} \mathbf{W}(P)$. According to the definition of $\preceq_{W}$, this directly translates to $Q \preceq_{W} P$ or $P \preceq_{W} Q$, respectively.
\end{proof}


The following proposition points out the link between Wasserstein order and FSD.

\begin{proposition}
Let $P$ and $Q$ be points of $\Delta_n$. If $P \preceq_{FSD} Q$, then $P \preceq_{W} Q$
\end{proposition}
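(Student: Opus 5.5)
The plan is to compare only the first entries of the Wasserstein vectors, $\delta_1(P)$ and $\delta_1(Q)$. The key observation is that $\delta_1$ is a positively weighted sum of the quantities $1-F_i(\cdot)$, so it is strictly monotone with respect to FSD. The hypothesis $P \preceq_{FSD} Q$ means $F_i(P)\le F_i(Q)$ for every $i\in\{1,\ldots,n\}$. The goal $P\preceq_W Q$ means, by definition, $\mathbf{W}(Q)\le_{lex}\mathbf{W}(P)$.

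First, I will write $\delta_1$ in terms of the cumulative distribution. The vertex $V_1$ has $F_i(V_1)=1$ for all $i$, and $F_i(\cdot)\le 1$ always. Hence, directly from \eqref{wass},
\begin{equation*}
\delta_1(P)=\sum_{i=1}^{n} a_i\bigl(1-F_i(P)\bigr),\qquad \delta_1(Q)=\sum_{i=1}^{n} a_i\bigl(1-F_i(Q)\bigr).
\end{equation*}
Subtracting these gives
\begin{equation*}
\delta_1(P)-\delta_1(Q)=\sum_{i=1}^n a_i\bigl(F_i(Q)-F_i(P)\bigr)\ge 0,
\end{equation*}
since every $a_i>0$ and every bracket is nonnegative by hypothesis.

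Second, I will split into two cases. If $\delta_1(Q)<\delta_1(P)$, then $\mathbf{W}(Q)<_{lex}\mathbf{W}(P)$ already at the first coordinate, so $P\preceq_W Q$. If instead $\delta_1(Q)=\delta_1(P)$, the sum above vanishes. It is a sum of nonnegative terms with strictly positive weights, so $F_i(P)=F_i(Q)$ for every $i=1,\ldots,n$. The cumulative sums $F_1,\ldots,F_n$, together with $x_1+\cdots+x_{n+1}=1$, determine every coordinate via $x_i=F_i-F_{i-1}$ and $x_{n+1}=1-F_n$. Therefore $P=Q$, and $P\preceq_W Q$ holds by reflexivity in Lemma~\ref{ordine-totale}.

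There is no real obstacle here. The only point needing care is the equality case, where the positivity of the weights $a_i$ must be used to upgrade $\delta_1(P)=\delta_1(Q)$ to $P=Q$. This is what spares us from examining the later coordinates $\delta_2,\ldots,\delta_{n+1}$ of the Wasserstein vector at all.
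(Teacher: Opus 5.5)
Your proof is correct and follows the same route as the paper's: write $\delta_1(\cdot)=\sum_{i=1}^n a_i\,(1-F_i(\cdot))$ and compare term by term using the FSD inequalities and the positivity of the weights. Your separate treatment of the tie $\delta_1(P)=\delta_1(Q)$, where the positive weights force $F_i(P)=F_i(Q)$ for all $i$ and hence $P=Q$, closes a step the paper leaves implicit: it passes directly from $\delta_1(Q)\le\delta_1(P)$ to $P\preceq_W Q$ without checking the later lexicographic coordinates.
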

\begin{proof}
By hypothesis, $P \preceq_{FSD} Q$ implies $F_i(P) \leq F_i(Q)$ for all $i\in\left\{1,\ldots,n\right\}$. 
Since the weights $a_i$ are strictly positive, the previous inequality implies, for each $i\in\left\{1,\ldots,n\right\}$, $a_i (1 - F_i(Q)) \le a_i (1 - F_i(P))$.
Summing over all indices $i = 1, \dots, n$, we obtain:
\begin{equation*}
\delta_1(Q)=W_{1, \mathbf{a}}(Q, V_1)=\sum_{i=1}^n a_i (1 - F_i(Q)) \le \sum_{i=1}^n a_i (1 - F_i(P))=W_{1, \mathbf{a}}(P, V_1)=\delta_1(P).
\end{equation*}
Hence $P \preceq_{W} Q$.
\end{proof}



In order to extend the Wasserstein order to the product of simplices, we recall the notion of tensorial product of two compositions. For any $P=(x_1, \ldots, x_{n+1}) \in \mathbb{R}^{n+1}$ and $Q=(y_1, \ldots, y_{h+1}) \in \mathbb{R}^{h+1}$ the \emph{tensorial (or Kronecker) product} of $P$ by $Q$ is defined by
\begin{equation}\label{tensor-product}
P \otimes Q = \left(x_{1}y_{1}, \ldots, x_{1}y_{h+1}, \ldots, x_{n+1}y_{1}, \ldots, x_{n+1}y_{h+1}\right) \in \mathbb{R}^{(n+1)(h+1)}.
\end{equation}
We have the following result.

\begin{lemma}\label{tensor}
If $P$ and $Q$ are compositions, then also $P \otimes Q$ is a composition.
\end{lemma}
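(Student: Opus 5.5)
The statement asks us to check the two defining properties of a composition for the vector $P\otimes Q$ in \eqref{tensor-product}: nonnegativity of every entry, and unit sum. Write $P=(x_1,\ldots,x_{n+1})\in\Delta_n$ and $Q=(y_1,\ldots,y_{h+1})\in\Delta_h$. By \eqref{tensor-product}, $P\otimes Q$ has $(n+1)(h+1)$ coordinates, and these are exactly the products $x_iy_j$ with $i\in\{1,\ldots,n+1\}$ and $j\in\{1,\ldots,h+1\}$, listed in lexicographic order of $(i,j)$. The plan is to show that $P\otimes Q\in\Delta_{(n+1)(h+1)-1}$, that is, $\Delta_{nh+n+h}$.

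\textbf{Nonnegativity.} Each coordinate is $x_iy_j$. Since $x_i\ge 0$ and $y_j\ge 0$ by the definition of $\Delta_n$ and $\Delta_h$, each product is nonnegative.

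\textbf{Unit sum.} The sum of all coordinates is a finite double sum, which factorizes:
\begin{equation*}
\sum_{i=1}^{n+1}\sum_{j=1}^{h+1} x_i y_j=\left(\sum_{i=1}^{n+1}x_i\right)\left(\sum_{j=1}^{h+1}y_j\right)=1\cdot 1=1 .
\end{equation*}
Here we use distributivity, and then the unit-sum constraints on $P$ and $Q$.

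There is no genuine obstacle in this argument. The only point needing care is bookkeeping: checking that the enumeration in \eqref{tensor-product} lists each pair $(i,j)$ exactly once. This holds because the index $(i-1)(h+1)+j$ is a bijection from the pairs $(i,j)$ onto $\{1,\ldots,(n+1)(h+1)\}$. It guarantees that the coordinate sum equals the double sum above, with no term repeated or omitted.
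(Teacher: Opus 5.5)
Your proposal is correct and matches the paper's proof, which also factors the double sum as $\bigl(\sum_{i=1}^{n+1}x_i\bigr)\bigl(\sum_{j=1}^{h+1}y_j\bigr)=1\cdot 1=1$. You also check nonnegativity of the entries $x_iy_j$ explicitly, which the paper leaves implicit, and the same argument covers the reordered ordinal tensor product, since permuting coordinates preserves both properties.
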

\begin{proof}
Let $P=\left(x_{1}, \ldots, x_{n+1}\right)\in\Delta_{n}$ and $Q=\left(y_{1}, \ldots, y_{h+1}\right)\in\Delta_{h}$. Then
\begin{equation*}
P \otimes Q =\left(x_{1}, \ldots, x_{n+1}\right) \otimes \left(y_{1}, \ldots, y_{h+1}\right)= \left(x_{1}y_{1}, \ldots, x_{1}y_{h+1}, \ldots, x_{n+1}y_{1}, \ldots, x_{n+1}y_{h+1}\right)
\end{equation*}
and, using that $\sum_{i=1}^{n}x_i =1$ and  $\sum_{j=1}^hy_j =1$,  we have $x_{1}y_{1}+\cdots+ x_{1}y_{h+1} +\cdots+ x_{n+1}y_{1}+ \cdots+x_{n+1}y_{h+1}=x_{1}(y_{1}+\cdots+y_{h+1})+\cdots+ x_{n+1}(y_{1}+\cdots+y_{h+1})
=(x_{1}+\cdots+x_{n+1})(y_{1}+\cdots+y_{h+1})=1\cdot 1 =1$.
Thus $P \otimes Q \in \Delta_{nh+n+h}$.
\end{proof}

In the case where $P=(x_1, \ldots, x_{n+1}) \in \Delta_n$ and $Q =(y_{1}, \ldots, y_{h+1})\in \Delta_h$ represent \textit{ordinal} compositions, we use a different ordering of the components $x_i y_j$ in \eqref{tensor-product}, in order to take into account the ordinal nature  of the compositions. Let $\mathcal{I}=\left\{1,\ldots,n+1\right\} \times \left\{1,\ldots,h+1\right\}$ be the set of joint states. We consider a \textit{cost map} 
\begin{equation}\label{mappa-costo}
    \mathcal{C}: \mathcal{I}\longrightarrow\mathbb{R}
\end{equation}
and a bijection $\sigma: \{1, \dots, (n+1)(h+1)\} \longrightarrow \mathcal{I}$ such that $\mathcal{C}(\sigma(k)) \le \mathcal{C}(\sigma(k+1))$ for all $k\in\left\{1,\ldots,(n+1)(k+1)\right\}$. The choice of the map \eqref{mappa-costo} and of the correspondent reordering map $\sigma$ may depend by several factors, such as the underlying geometry or the specific context. A natural choice, for example, is to take $\mathcal{C}(i,j):=i+j$ and to define
\begin{equation}\label{ordinal-tensor-product}
P \otimes Q =\left(x_{1}, \ldots, x_{n+1}\right) \otimes \left(y_{1}, \ldots, y_{h+1}\right)= \left(x_{1}y_{1}, x_{1}y_{2}, x_{2}y_{1}, \ldots, x_{n+1}y_{h+1}\right)
\end{equation}
However, to ensure that the joint construction induces a total order and preserves the metric information of the marginals, the choice of the map \eqref{mappa-costo} should be guided  by the underlying geometry of the simplices $\Delta_n$ and $\Delta_h$. 
Suppose $\Delta_n$ and $\Delta_h$ are endowed with weighted 1-Wasserstein distances with weights $\mathbf{v}=(v_1, \ldots, v_n)$ and $\mathbf{u}=(u_1, \dots, u_h)$, respectively. Let $c_i = \sum_{k=1}^{i-1} v_k$ and $c'_j = \sum_{m=1}^{j-1} u_m$ be the cumulative costs, where $c_1=c'_1=0$. We define the \emph{ordinal cost function} as:
\begin{equation}\label{refined-cost}
\mathcal{C}(i, j) = c_i + c'_j + i\delta + j\eta
\end{equation}
where $\delta, \eta > 0$ are infinitesimal constants chosen to ensure that $\mathcal{C}$ is strictly injective on $\mathcal{I}$ (e.g., by taking $\delta$ and $\eta$ linearly independent over $\mathbb{Q}$). This ensures the uniqueness of the bijection $\sigma$, which now satisfies the strict inequality $\mathcal{C}(\sigma(k)) < \mathcal{C}(\sigma(k+1))$.  The  joint simplex $\Delta_{nh+n+h}$ is endowed with the weighted 1-Wasserstein distance with weights $\mathbf{w}=(w_1, \dots, w_{nh+n+h})$ defined as
$w_k = \mathcal{C}(\sigma(k+1)) - \mathcal{C}(\sigma(k)) > 0$.


\begin{remark}
The infinitesimal terms $\delta$ and $\eta$ in \eqref{refined-cost} serve to give a choice in the case of a tie, favoring the simplex $\Delta_n$ (taking $\delta$  much greater than $\eta$) or $\Delta_h$ (taking $\eta$  much greater than $\delta$), according to the user preferences. Moreover, they ensure that none of the weights $w_1, \ldots, w_{nh+n+h}$ of the 1-Wasserstein distance $W_{1, \mathbf{w}}$ of $\Delta_{nh+n+h}$ is zero (if that had happened, $W_{1, \mathbf{w}}$ would not have satisfied the identity of indiscernibles).
\end{remark}

\begin{example}\label{ex}
Let $\Delta_2$ ($n+1=3$) represent a primary quality dimension (e.g., product functionality) and $\Delta_3$ ($h+1=4$) represent a secondary dimension (e.g., delivery speed). Let us suppose that the weights of the weighted $1$-Wasserstein distance are $v_1 = 5, v_2 = 15$ for $\Delta_2$ and $u_1 = 2, u_2 = 4, u_3 = 8$ for $\Delta_3$. To ensure a strictly total order, we apply the ordinal cost function $\mathcal{C}(i, j) = c_i + c'_j + i\delta + j\eta$. Table \ref{esempio} shows the components of these refined costs.
\begin{table}[!ht]
\centering
\caption{Ordinal costs $\mathcal{C}(i, j)$ and corresponding ranking via $\sigma$}\label{esempio}
\begin{tabular}{@{}ccccc@{}}
\toprule
$(i, j)$ & $c_i = \sum a$ & $c'_j = \sum b$ & {Ordinal cost $\mathcal{C}(i, j)$} & {Rank ($\sigma$)} \\ \midrule
(1, 1) & 0  & 0  & $0 + \delta + \eta$    & 1 \\
(1, 2) & 0  & 2  & $2 + \delta + 2\eta$   & 2 \\
(2, 1) & 5  & 0  & $5 + 2\delta + \eta$   & 3 \\
(1, 3) & 0  & 6  & $6 + \delta + 3\eta$   & 4 \\
(2, 2) & 5  & 2  & $7 + 2\delta + 2\eta$  & 5 \\
(2, 3) & 5  & 6  & $11 + 2\delta + 3\eta$ & 6 \\
(1, 4) & 0  & 14 & $14 + \delta + 4\eta$  & 7 \\
(2, 4) & 5  & 14 & $19 + 2\delta + 4\eta$ & 8 \\
(3, 1) & 20 & 0  & $20 + 3\delta + \eta$  & 9 \\
(3, 2) & 20 & 2  & $22 + 3\delta + 2\eta$ & 10 \\
(3, 3) & 20 & 6  & $26 + 3\delta + 3\eta$ & 11 \\
(3, 4) & 20 & 14 & $34 + 3\delta + 4\eta$ & 12 \\ \bottomrule
\end{tabular}
\end{table}
In this specific case, for any $P=(x_{1},x_{2},x_{3})\in\Delta_2$ and $Q=(y_{1},y_{2},y_{3},y_{4})\in\Delta_3$, the ordinal tensor product is defined as:
\begin{align*}
P\otimes Q &= \left(x_{1}y_{1}, x_{1}y_{2}, x_{2}y_{1}, x_{1}y_{3}, x_{2}y_{2}, x_{2}y_{3}, x_{1}y_{4}, x_{2}y_{4}, x_{3}y_{1}, x_{3}y_{2}, x_{3}y_{3}, x_{3}y_{4}\right).
\end{align*}
The joint weight vector $\mathbf{w} = (w_1, \dots, w_{11})$ in $\Delta_{11}$ is obtained by the differences between successive refined costs. By selecting $\delta$ and $\eta$ such that $0 < \eta \ll \delta \ll 1$, we ensure that these infinitesimal terms remain subordinate to the primary cost increments induced by $\mathbf{a}$ and $\mathbf{b}$. Furthermore, this specific hierarchy assigns a priority to the first simplex in the event of metric ties. Indeed, if two states $(i,j)$ and $(s,t)$ satisfy $c_i+c'_j=c_s+c'_t$, the condition $\eta \ll \delta$ ensures that the refined costs are ordered according to the first index. 
Consequently, all weights $w_k$ are strictly positive and  are given by:
\begin{align*}
    w_1 &= \mathcal{C}(1,2) - \mathcal{C}(1,1) = 2 + \eta, \quad\quad\quad      w_2 = \mathcal{C}(2,1) - \mathcal{C}(1,2) = 3 + \delta - \eta,\\
    w_3 &= \mathcal{C}(1,3) - \mathcal{C}(2,1) = 1 - \delta + 2\eta, \quad  w_4 = \mathcal{C}(2,2) - \mathcal{C}(1,3) = 1 + \delta - \eta,\\
    w_5 &= \mathcal{C}(2,3) - \mathcal{C}(2,2) = 4 + \eta, \quad\quad\quad  w_6 = \mathcal{C}(1,4) - \mathcal{C}(2,3) = 3 - \delta + \eta,\\
    w_7 &= \mathcal{C}(2,4) - \mathcal{C}(1,4) = 5 + \delta, \quad\quad\quad  w_8 = \mathcal{C}(3,1) - \mathcal{C}(2,4) = 1 + \delta - 3\eta, \\
    w_9 &= \mathcal{C}(3,2) - \mathcal{C}(3,1) = 2 + \eta, \quad\quad\quad  w_{10} = \mathcal{C}(3,3) - \mathcal{C}(3,2) = 4 + \eta, \\
    w_{11} &= \mathcal{C}(3,4) - \mathcal{C}(3,3) = 8 + \eta.
\end{align*}
\end{example}
We now extend the Wasserstein order to the product of two simplices. A key role in this extension is played by the following lemma.

\begin{lemma}\label{segre-injective}
The map  $i: \Delta_n \times \Delta_h \longrightarrow \Delta_{nh+n+h}$  defined by $i(P, Q) = P \otimes Q$ is injective, i.e. for any $P,\tilde{P} \in\Delta_n$,  $Q,\tilde{Q}\in\Delta_h$, if $P \otimes Q = \tilde{P} \otimes \tilde{Q}$ then $P=\tilde{P}$ and $Q=\tilde{Q}$.
\end{lemma}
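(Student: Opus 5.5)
The plan is to recover the marginals $P$ and $Q$ from the entries of $P\otimes Q$ by summing over the appropriate indices. This is the Segre-type argument: on the simplex, the unit-sum constraint normalizes each factor, so the usual scaling ambiguity $P\otimes Q=(\lambda P)\otimes(\lambda^{-1}Q)$ cannot occur.

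\textbf{Step 1: reduce to the standard Kronecker ordering.} Both the Kronecker product \eqref{tensor-product} and the ordinal tensor product \eqref{ordinal-tensor-product} list the same collection of numbers $\{x_iy_j\}_{(i,j)\in\mathcal{I}}$. They differ only by the fixed bijection $\sigma$, which depends on the weights and not on $P,Q$. Hence $P\otimes Q=\tilde P\otimes\tilde Q$ in either convention is equivalent to
\begin{equation*}
x_iy_j=\tilde x_i\tilde y_j \quad \text{for all } (i,j)\in\mathcal{I}.
\end{equation*}
The result therefore holds for whichever ordering is in use, and I would work with this entrywise identity.

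\textbf{Step 2: recover the first factor.} Fix $i$ and sum the identity over $j=1,\ldots,h+1$. Since $\sum_j y_j=\sum_j\tilde y_j=1$, we get
\begin{equation*}
x_i=\sum_j x_iy_j=\sum_j\tilde x_i\tilde y_j=\tilde x_i .
\end{equation*}
This holds for every $i$, so $P=\tilde P$.

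\textbf{Step 3: recover the second factor.} Symmetrically, fix $j$ and sum over $i$, using $\sum_i x_i=\sum_i\tilde x_i=1$. This gives $y_j=\tilde y_j$ for every $j$, so $Q=\tilde Q$. The same normalization computation appears in the proof of Lemma \ref{tensor}.

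There is no real obstacle. The only point needing care is Step 1: one must state explicitly that the reordering $\sigma$ is the same fixed permutation for both pairs $(P,Q)$ and $(\tilde P,\tilde Q)$, so that equality of the reordered vectors gives equality of corresponding entries $x_iy_j$. Note also that the argument uses only the unit-sum constraint, not nonnegativity. Nonnegativity matters only for knowing that the image lies in $\Delta_{nh+n+h}$, which is Lemma \ref{tensor}.
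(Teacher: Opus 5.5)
Your proof is correct and matches the paper's argument: you sum the entrywise identity $x_iy_j=\tilde x_i\tilde y_j$ over $j$ (respectively over $i$) and use the unit-sum constraint to obtain $P=\tilde P$ (respectively $Q=\tilde Q$). Your remark that the reordering $\sigma$ is a fixed permutation, independent of $(P,Q)$, is a small clarification that the paper leaves implicit.
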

\begin{proof}
Let  $P = (x_1, \dots, x_{n+1}),  \tilde{P} = (\tilde{x}_1, \dots, \tilde{x}_{n+1}) \in \Delta_n$  and  $Q = (y_1, \dots, y_{h+1}),  \tilde{Q} = (\tilde{y}_1, \dots, \tilde{y}_{h+1}) \in \Delta_h$.
Suppose that  we have $i(P,Q)=i(\tilde{P},\tilde{Q})$, that is $P \otimes Q = \tilde{P} \otimes \tilde{Q}$.  This equality   implies that for every pair of indices $i \in \{1, \dots, n+1\}$ and $j \in \{1, \dots, h+1\}$, the corresponding components are equal:
\begin{equation}\label{prodotto-componenti}
x_i y_j = \tilde{x}_i \tilde{y}_j
\end{equation}
Let us fix the index $i$ and sum both sides of \eqref{prodotto-componenti} over all possible values of $j$, yielding $\sum_{j=1}^{h+1} (x_i y_j) = \sum_{j=1}^{h+1} (\tilde{x}_i \tilde{y}_j)$. Using that $x_i$ and $\tilde{x}_i$ do not depend on the summation index $j$ and  $\sum_{j=1}^{h+1} y_j =\sum_{j=1}^{h+1} \tilde{y}_j = 1$, as in $Q, \tilde{Q} \in \Delta_{h}$, the previous equality becomes
$x_i=x_{i}\cdot 1=x_i  \sum_{j=1}^{h+1} y_j  = \tilde{x}_i  \sum_{j=1}^{h+1} \tilde{y}_j =\tilde{x}_{i}\cdot 1 = \tilde{x}_i$.
Since this equality holds for every $i \in \{1, \dots, n+1\}$, it follows that $P = \tilde{P}$. In a similar way one proves  that $Q = \tilde{Q}$.
\end{proof}

The map defined in Lemma \ref{segre-injective} allows us to apply the Wasserstein order construction to the image $i(P,Q)$, for any $P\in\Delta_n, Q\in\Delta_h$. We then associate each point $X=P\otimes Q\in\Delta_{nh+n+h}$  with a distance profile vector $\mathbf{W}(X) = \left(\delta_{1}(X), \delta_{2}(X), \dots, \delta_{(n+1)(h+1)}(X)\right)$, with respect to the weighed $1$-Wasserstein distance $W_{1, \mathbf{w}}$ on $\Delta_{nh+n+h}$. The total order relation $\preceq_{W}$ on $\Delta_n \times \Delta_h$ is formally defined by the lexicographic comparison of these distance profiles:
\begin{equation*}
(P, Q) \preceq_{W} (\tilde{P}, \tilde{Q}) \iff \mathbf{W}(\tilde{P} \otimes \tilde{Q})   \le_{lex} 
\mathbf{W}(P \otimes Q) 
\end{equation*}
where $\le_{lex}$ is the standard lexicographic order on $\mathbb{R}^{(n+1)(h+1)}$. This construction satisfies all the axioms of a total order relation. In particular, reflexivity, transitivity and totality follow directly from the properties $\le_{lex}$, as in Lemma \ref{ordine-totale}. The only non-trivial property to be checked is the antisymmetry: 
 if $(P, Q) \preceq_{W} (\tilde{P}, \tilde{Q})$ and $(\tilde{P}, \tilde{Q}) \preceq_{W} (P, Q)$, then $\mathbf{W}(P \otimes Q) = \mathbf{W}(\tilde{P} \otimes \tilde{Q})$. Since a point in a simplex is uniquely determined by the set of its $1$-Wasserstein distances from the vertices, it follows that $P \otimes Q = \tilde{P} \otimes \tilde{Q}$. Furthermore, by Lemma \ref{segre-injective}, the equality $P \otimes Q = \tilde{P} \otimes \tilde{Q}$ implies $(P, Q) = (\tilde{P}, \tilde{Q})$.

In Appendix A we demonstrate the metric consistency of our construction by proving that the joint distance from the optimal vertex is additive with respect to the marginal factors.\\ 

The Wasserstein order may be used for providing a measure of the ordinal association between two sets of compositional ordinal data, including the case in which data belong to product of simplices. Let $\left\{(P_i, P'_i)\right\}_{i=1}^N$ be $N$ pairs of ordinal data, where $P_i \in \Delta_n$ and $P'_i \in \Delta_m$ for each $i\in\left\{1,\ldots,N\right\}$. We want to understand if there is association between $\left\{P_1, \ldots, P_N\right\}$ and $\left\{P'_1, \ldots, P'_N\right\}$.  To address this, we can define a non-parametric measure of association, the \textit{Ordinal Compositional Correlation} (\textrm{OCC}), as the Spearman's rank correlation between these two sets of data according to the rank induced by the Wasserstein order in each simplex $\Delta_n$ and $\Delta_m$. 
The \textrm{OCC} takes values in $[-1,1]$ and can be interpreted analogously to standard correlation measures. In particular, values close to $+1$ indicate a strong positive monotonic association, whereby predictor compositions with mass concentrated at higher indices tend to be associated with response compositions exhibiting similarly located mass. Conversely, values near $-1$ reflect a strong negative association, while values around $0$ suggest the absence of a monotonic relationship between the predictor’s center of mass and the ordinal level of the response.


\medskip

The $1$-Wasserstein distance can be extended also a distance in the set $\mathcal{CS}^{(m+1)\times(n+1)}$ of column-stochastic matrices, which will play a central role in our regression models. A matrix ${A} \in \mathbb{R}^{(m+1)\times(n+1)}$ is said to be \emph{column-stochastic} if its entries ${A}_{ij} \ge 0$ for all $i,j$, and $\sum_{i=1}^{m+1} {A}_{ij} = 1$ for all $j=1,\ldots,n+1$.
Geometrically, this implies that each column of ${A}$, denoted by ${A}_{\cdot j}$, is an element of the simplex $\Delta_m$.
Using this geometric property, we can naturally extend the weighted $1$-Wasserstein distance to the space of column-stochastic matrices. Given a weight vector $\mathbf{a}=(a_1, \dots, a_m)$ associated with the output space $\Delta_m$, the weighted $1$-Wasserstein distance between two matrices ${A}, {B} \in \mathcal{CS}^{(m+1)\times(n+1)}$ is defined as the sum of the distances between their corresponding columns:
\begin{equation}\label{wass_matrix}
W_{1, \mathbf{a}}({A}, {B}) := \sum_{j=1}^{n+1} W_{1, \mathbf{a}}({A}_{\cdot j}, {B}_{\cdot j}) = \sum_{j=1}^{n+1} \sum_{k=1}^{m} a_k \left| F_{k}({A}_{\cdot j}) - F_{k}({B}_{\cdot j}) \right|.
\end{equation}

\section{The Regression Framework}\label{sec3}
In this section, we present a unified framework for regression analysis where input and output variables belong to simplex spaces or, more in general, products of simplexes.
As mentioned in Section \ref{sec1}, the general model \eqref{modello} can be splitted into four cases. We begin with the core model mapping a simplex to another simplex (Case I) and then demonstrate how this prototype naturally extends to handle product spaces (Cases II, III, and IV).  

\subsection{Core Model and LP formulation}
We focus on the regression of an explanatory variable lying in the input simplex $\Delta_n$ to an ordinal response variable in the output simplex $\Delta_m$.
Consider a sample of $N$ pairs $\{({P}_i, {P}'_i)\}_{i=1}^N$. Here, ${P}_i \in \Delta_n$ denotes the predictor, while ${P}'_i \in \Delta_m$ denotes the ordinal response distribution. We assume a linear dependence structure defined by $f({P}) = {A}{P}$, where the regression coefficient ${A} \in \mathcal{CS}^{(m+1)\times(n+1)}$ is a column-stochastic matrix.
The estimation of ${A}$ is performed by minimizing the loss function $L({A})$, defined as the sum of the weighted $1$-Wasserstein distances in $\Delta_m$ between the predicted values ${A}{P}_i$ and the observed targets ${P}'_i$:
\begin{equation*}
    \mathcal{L}({A}) = \sum_{i=1}^N W_{1, \mathbf{a}}({A}{P}_i, {P}'_i) = \sum_{i=1}^N \sum_{k=1}^{m} a_k \left| F_{k}({A}{P}_i) - F_{k}({P}'_i) \right|.
\end{equation*}
Formally, our estimator $\hat{{A}}$ is defined as the matrix that minimizes the total error over the set of column-stochastic matrices:
\begin{equation}
    \hat{{A}} = \underset{{A} \in \mathcal{CS}^{(m+1)\times(n+1)}}{\arg\min} \sum_{i=1}^{N} W_{1, \mathbf{a}}({A}{P}_i, {P}'_i).
    \label{eq:estimator_def}
\end{equation}
Since the objective function involves $L_1$-norm terms, this optimization problem can be reformulated exactly as a Linear Program (LP) by introducing auxiliary variables $t_{ik}$ to handle the nonlinearity. The resulting formulation is:
\begin{equation*}
    \min_{{A}, \{t_{ik}\}} \quad \sum_{i=1}^N \sum_{k=1}^{m} a_k t_{ik}
\end{equation*}
subject to the structural constraints on ${A}$  and the linearization constraints for the absolute values:
\begin{equation}\label{LP-constraints}
    t_{ik} \ge F_{k}({A}{P}_i) - F_{k}({P}'_i), \quad 
    t_{ik} \ge -\left(F_{k}({A}{P}_i) - F_{k}({P}'_i)\right), \quad
    {A} \in \mathcal{CS}^{(m+1)\times(n+1)}, 
\end{equation}
for all $i\in\left\{1,\dots,N\right\}$ and $k\in\left\{1,\dots,m\right\}$.
This linear formulation ensures global optimality and can be efficiently solved using standard optimization software.

\subsection{Case II: Product Outcome }
This scenario models a mapping from a single predictor in $\Delta_n$ to a pair of ordinal response variables lying in $\Delta_m$ and $\Delta_k$, respectively.
Given a dataset $\{({P}_i, ({P}'_i, {Q}'_i))\}_{i\in\left\{1,\ldots, N\right\}}$, with ${P}'_i \in \Delta_m$ and ${Q}'_i \in \Delta_k$, the model is    $f({P}) = ({A}{P}, {B}{P})$, where ${A} \in \mathcal{CS}^{(m+1)\times(n+1)}$ and $B \in \mathcal{CS}^{(k+1)\times(n+1)}$. To estimate the parameters, we consider weight vectors $\mathbf{a}=(a_1,\ldots,a_m)$ for $\Delta_m$ and $\mathbf{b}=(b_1,\ldots,b_k)$ for $\Delta_k$. The loss function is the sum of the individual Wasserstein distances:
\begin{equation*}
    \mathcal{L}({A}, {B}) = \sum_{i=1}^N \left( W_{1, \mathbf{a}}({A}{P}_i, {P}'_i) + W_{1, \mathbf{b}}({B}{P}_i, {Q}'_i) \right).
\end{equation*}
Since the parameters ${A}$ and ${B}$ do not interact in the objective function, the optimization problem breaks down into two independent Case I regressions. We solve for $\hat{{A}}$ using $\{({P}_i, {P}'_i)\}$ and for $\hat{{B}}$ using $\{({P}_i, {Q}'_i)\}$, utilizing the LP formulation in Section 3.1.

\subsection{Case III: Product Input }\label{prodotto1}
Here, the explanatory variable is a pair $({P}, {Q})$ with ${P} \in \Delta_n$ and ${Q} \in \Delta_h$. To capture the interaction between inputs, we employ the tensor product. For each observation $i$, we construct the vector    ${Z}_i = {P}_i \otimes {Q}_i \in \mathbb{R}^{(n+1)(h+1)}$.
By Lemma \ref{tensor}, ${Z}_i$ is a valid composition residing in the higher-dimensional simplex $\Delta_{D}$, where $D = nh + n + h$.
The regression model is
    $f({P}, {Q}) = {A}({P} \otimes {Q})$,
where ${A} \in \mathcal{CS}^{(m+1)\times(D+1)}$.
This case effectively reduces to Case I. By transforming each input pair into a single vector ${Z}_i$, we treat the problem as a standard regression from $\Delta_{D}$ to $\Delta_m$, solvable via the minimization of $\sum_i W_{1, \mathbf{a}}({A}{Z}_i, {P}'_i)$.

\subsection{Case IV: Product Input and Product Outcome}\label{prodotto2}
Finally, we consider the general case mapping $\Delta_n \times \Delta_h \longrightarrow \Delta_m \times \Delta_k$. The model combines the strategies from the previous scenarios:
 $f({P}, {Q}) = \left({A}({P} \otimes {Q}), \; {B}({P} \otimes {Q})\right)$.
Operationally, we first pre-process the inputs to construct ${Z}_i = {P}_i \otimes {Q}_i \in \Delta_D$. Subsequently, the problem separates into two independent tasks: solving for ${A}$ using $\{({Z}_i, {P}'_i)\}$ and for ${B}$ using $\{({Z}_i, {Q}'_i)\}$. This demonstrates the modularity of our framework, as even complex product-space architectures are resolved by iteratively applying the canonical linear programming estimator.

\subsection{Regularized estimation}\label{regularization}

In the regression scenarios previously described, particularly when the dimensionality of the input space increases (as in Cases III and IV) or when the sample size $N$ is small, the estimation of the transformation matrix ${A}$ may become unstable or subject to overfitting. In such contexts, the standard minimization of the loss function (15) might yield a solution where adjacent columns of ${A}$ exhibit sharp variations. This is often undesirable because, in many applications, one expects the response distribution to change smoothly as the input moves along its ordinal categories. To address this, we introduce a regularization framework specifically designed to respect the ordinal geometry of the simplex. The core idea is to add two penalty terms based on the 1-Wasserstein distance between columns of ${A}$ that correspond to adjacent categories in the input space.

\subsubsection{Regularization for the Base Case (Case I)} 
In the core model mapping $\Delta_n$ to $\Delta_m$, the regularized objective function is:
\begin{align*}
\mathcal{L}_{reg}({A})=\sum_{i=1}^N W_{1, \mathbf{a}}({A}{P}_i, {P}'_i)
+ &\lambda_1 \sum_{j=1}^n \sum_{k=1}^{m} a_k \left| F_{k,j+1} - F_{k,j} \right|\\
 +& \lambda_2 \sum_{j=2}^{n} \sum_{k=1}^{m} a_k \left| F_{k,j-1} - 2F_{k,j} + F_{k,j+1} \right|
\end{align*}
where, for each $k\in \left\{1,\dots,m\right\}$, $F_{k,j} = \sum_{r=1}^k A_{rj}$ is the cumulative representation associated with the $j$-th column $A_{\cdot j}$, and $\lambda_1, \lambda_2 \geq 0$ are tuning parameters controlling the trade-off between goodness-of-fit and smoothness of the estimated conditional distributions. 
While these penalty terms are algebraically expressed via absolute differences to preserve the linearity of the optimization problem, they can be interpreted   within   Wasserstein geometry. Namely, the first-order penalty translates precisely to the sum of the distances between adjacent conditional distributions:
\begin{equation*}
    \sum_{j=1}^n \sum_{k=1}^{m} a_k \left| F_{k,j+1} - F_{k,j} \right| = \sum_{j=1}^n W_{1, \mathbf{a}}\left(A_{\cdot j+1}, A_{\cdot j}\right).
\end{equation*}
Moreover, by factoring out a constant $2$, the inner term of the second-order variation can be algebraically rewritten as   $ |F_{k,j-1} - 2F_{k,j} + F_{k,j+1}| = 2 \left| \frac{F_{k,j-1} + F_{k,j+1}}{2} - F_{k,j} \right|$.
Because the mapping from a probability mass function to its cumulative distribution is linear, the quantity $(F_{k,j-1} + F_{k,j+1})/2$ is exactly the cumulative distribution function of the statistical mixture  between the two adjacent columns, namely $\frac{1}{2}A_{\cdot j-1} + \frac{1}{2}A_{\cdot j+1}$. Since any convex combination of points in a simplex remains within the simplex, this mixture is a point of $\Delta_m$. Consequently, the second-order regularization term corresponds to
\begin{equation*}
    2 \sum_{j=2}^{n} \sum_{k=1}^{m} a_k \left| \frac{F_{k,j-1} + F_{k,j+1}}{2} - F_{k,j} \right| = 2 \sum_{j=2}^{n} W_{1, \mathbf{a}}\left( \frac{A_{\cdot j-1} + A_{\cdot j+1}}{2}, A_{\cdot j} \right).
\end{equation*}
Thus  the $\lambda_2$ penalty  evaluates a 1-Wasserstein distance, forcing the $j$-th conditional distribution to be as close as possible to the barycentric mixture of its neighbors.

\subsubsection{Extension to Product Input Spaces (Cases III and IV)}
In Cases III and IV, the explanatory variable is represented by the vector ${P} \otimes {Q}$, which resides in the higher-dimensional simplex $\Delta_{nh+n+h}$. As detailed in Section \ref{sec2}, the components of the tensor product are uniquely arranged according to an ordinal cost function $\mathcal{C}$ and permutation $\sigma$. This construction, which maps the multidimensional interaction of the predictors onto a single, unified ordinal length scale $(n+1)(h+1)$, ensures that two columns $j$ and $l$ of the  matrix ${A}$ are  adjacent if they correspond to consecutive states in the total order established by the mapping  $\sigma$. Thus, as in the previous case, we can define the set of adjacent indices as $\mathcal{E} = \left\{ (j, l) : |j - l| = 1 \right\}$, where $j, l \in \{1, \dots, (n+1)(h+1)\}$. By adopting this linear definition of adjacency, the regularization penalty promotes smoothness along the path of increasing joint costs. This approach ensures that as the input ``moves'' from one joint state to the next according to the Wasserstein total order, the predicted response distribution changes progressively, effectively leveraging the integrated ordinal structure of the product space.

\subsubsection{Extension to Product Outcomes (Cases II and IV)}
When the response consists of multiple compositions (e.g., ${A}$ and ${B}$), the total regularized loss is the sum of the individual regularized objectives. Since the parameters do not interact, we can optimize each matrix independently, by $\mathcal{L}_{reg}({A}, {B}) = \mathcal{L}_{reg}({A}) + \mathcal{L}_{reg}({B})$, where each term follows the logic of the single-outcome scenario with its respective tuning parameter ($\lambda_A, \lambda_B$).

\subsubsection{Regularized LP Formulation}
Since the weighted 1-Wasserstein distance $W_{1, \mathbf{a}}$ is defined as an $L_1$ norm of differences between cumulative distribution functions, the regularization preserves the linearity of the optimization problem. 
%
The regularized problem is solved via the following Linear Program:
\begin{align*}
\min_{{A},\{t_{ik}\},\{u_{jk}\},\{v_{jk}\}}
 \sum_{i=1}^N \sum_{k=1}^m a_k t_{ik}
+ \lambda_1 \sum_{j=1}^n \sum_{k=1}^m a_k u_{jk}
+ \lambda_2 \sum_{j=2}^n \sum_{k=1}^m a_k v_{jk}
\end{align*}
subject to the  constraints \eqref{LP-constraints}  for $t_{ik}$ and ${A}$, and the additional constraints associated with the first- and second-order regularization terms. For the first-order regularization:
\begin{equation*}
u_{jk} \geq F_{k,j} - F_{k,j+1},    \quad
u_{jk} \geq -(F_{k,j} - F_{k,j+1}),
\end{equation*}
for all $j=1,\ldots ,n$ and $k=1,\ldots,m$, while for the second-order regularization
\begin{equation*}
v_{jk}
\geq
F_{k,j-1} - 2F_{k,j} + F_{k,j+1}, \quad
v_{jk}
\geq
-\left(F_{k,j-1} - 2F_{k,j} + F_{k,j+1}\right),
\end{equation*}
fo rall $j=2,\dots,n$ and $ k=1,\dots,m$.
The first-order regularization promotes similarity between adjacent conditional distributions and tends to induce piecewise-constant structures across neighboring columns. This is particularly useful when abrupt changes occur. The second-order regularization instead penalizes variations in the discrete gradient across adjacent columns, enforcing smoother transitions in the ordered index space.
Depending on the application, one may employ either the first-order term, the second-order term, or both simultaneously. In practice, the first-order penalty is more suitable when the underlying structure is expected to exhibit segmentation (for example, due to a very low value of $N$), whereas the second-order penalty is preferable when smooth functional evolution across columns is assumed, this term becomes particularly appropriate when the number of columns is large. For further details regarding the selection procedure of the regularization parameters see Appendix B.

\subsection{Measuring Goodness-of-Fit: the Wasserstein $R^2$ and the Order Preservation Index}
\label{ssec:goodness_of_fit}

After fitting the model $f({P}) = {A}{P}$ (or its extensions to product spaces), its performance must be assessed. To this end, we introduce two complementary indices: a coefficient of determination, $R^2_W$, capturing metric accuracy, and an Order Preservation Index (\textrm{OPI}), designed to evaluate the structural monotonicity of the regression.
We begin by defining a baseline model that ignores the predictors and systematically predicts the  Fréchet mean $\bar{{P}}' \in \Delta_m$ of the points $P'_{1}, \ldots, P'_{N}$, defined by
\begin{equation*}
\bar{{P}}' = \underset{P' \in \Delta_m}{\arg\min} \sum_{i=1}^N W_{1,\mathbf{a}}(P', {P}'_i).
\end{equation*}
As proved in Appendix E, $\bar{{P}}'$ is the point of $\Delta_m$ whose CDF components are given by
$\bar{F}_k = \mathrm{median}\left \{ F_k({P}'_1), \ldots, F_k({P}'_N)   \right\}$,
for each $k\in\left\{1,\ldots,m\right\}$. The prediction error of this baseline captures the overall ordinal dispersion in the data and is defined as the \emph{Total Wasserstein Dispersion}:
\begin{equation}
{\mathcal{D}}_{\textrm{tot}} = \sum_{i=1}^N W_{1, \mathbf{a}}(\bar{{P}}', {P}'_i).
\label{eq:SST_W}
\end{equation}
In the case of product outcomes (Cases II and IV), since the loss function is defined additively across the marginal spaces, ${\mathcal{D}}_{\textrm{tot}}$ naturally decomposes as the sum of the marginal total Wasserstein dispersions. We then quantify the variability unexplained by the fitted model through the sum of Wasserstein distances between the predicted and observed compositions,
that we call \emph{Residual Wasserstein Dispersion}
\begin{equation}
{\mathcal{D}}_{\textrm{res}} = \sum_{i=1}^N W_{1, \mathbf{a}}({f({P}_i)}, {{P}'_i}).
\label{eq:SSE_W}
\end{equation}
The \emph{Wasserstein coefficient of determination} $R^2_W$ is defined as the proportional reduction in error relative to the baseline:
\begin{equation}
R^2_W = 1-\frac{{\mathcal{D}}_{\textrm{res}}}{{\mathcal{D}}_{\textrm{tot}}}
\label{eq:R2_W}
\end{equation}
While $R^2_W$ represents a conceptual analog to the classical Euclidean $R^2$, the Wasserstein geometry does not naturally yield an additive decomposition of variance. Nevertheless, thanks to our specific LP formulation and the choice of the true Fréchet mean as the baseline, $R^2_W$ is mathematically guaranteed to be strictly bounded between $0$ and $1$ in-sample, making it a robust diagnostic tool.

\begin{proposition}\label{prop:r2_bounds}
For any fitted model obtained via the unregularized or regularized Linear Programming framework, the in-sample Wasserstein coefficient of determination strictly satisfies $0 \le R^2_W \le 1$, assuming ${\mathcal{D}}_{\textrm{tot}}>0$.
\end{proposition}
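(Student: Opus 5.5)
The plan is to prove the two inequalities separately. The upper bound $R^2_W \le 1$ is immediate: ${\mathcal{D}}_{\textrm{res}}$ is a sum of weighted $1$-Wasserstein distances, each a nonnegative combination of absolute values, so ${\mathcal{D}}_{\textrm{res}} \ge 0$. Since ${\mathcal{D}}_{\textrm{tot}}>0$ by assumption, $R^2_W = 1-{\mathcal{D}}_{\textrm{res}}/{\mathcal{D}}_{\textrm{tot}} \le 1$. The substance is the lower bound, which amounts to ${\mathcal{D}}_{\textrm{res}} \le {\mathcal{D}}_{\textrm{tot}}$. I would obtain it by showing that the baseline predictor lies in the feasible set and comparing it with the optimum.

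\emph{Key construction.} Let $A_0 \in \mathcal{CS}^{(m+1)\times(n+1)}$ be the matrix all of whose columns equal the Fréchet mean $\bar{P}'$. It is column-stochastic because $\bar{P}'\in\Delta_m$. For any $P=(x_1,\ldots,x_{n+1})\in\Delta_n$ we have $A_0P=\sum_j x_j \bar{P}' = \bar{P}'$, since $\sum_j x_j=1$. Hence
\[
\mathcal{L}(A_0)=\sum_{i=1}^N W_{1,\mathbf{a}}(\bar{P}',P'_i)={\mathcal{D}}_{\textrm{tot}} .
\]
In the unregularized case, $\hat A$ is a global minimizer of \eqref{eq:estimator_def}; the LP formulation is exact and returns a global optimum. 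Therefore ${\mathcal{D}}_{\textrm{res}}=\mathcal{L}(\hat A)\le \mathcal{L}(A_0)={\mathcal{D}}_{\textrm{tot}}$.

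\emph{Regularized case.} This is the step needing care, because $\hat A$ minimizes $\mathcal{L}_{reg}$ rather than $\mathcal{L}$. The argument rests on two observations.
\begin{enumerate}[(a)]
\item All columns of $A_0$ coincide, so $F_{k,j}$ does not depend on $j$. Both the first-difference and second-difference penalties therefore vanish, giving $\mathcal{L}_{reg}(A_0)=\mathcal{L}(A_0)={\mathcal{D}}_{\textrm{tot}}$.
\item The penalties are nonnegative whenever $\lambda_1,\lambda_2\ge 0$.
\end{enumerate}
Combining these,
\[
{\mathcal{D}}_{\textrm{res}}=\mathcal{L}(\hat A)\le \mathcal{L}_{reg}(\hat A)\le \mathcal{L}_{reg}(A_0)={\mathcal{D}}_{\textrm{tot}}.
\]

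\emph{Extensions.} In Cases III and IV the input is $Z_i=P_i\otimes Q_i$, which lies in $\Delta_D$ by Lemma \ref{tensor}. The same constant-column matrix in $\mathcal{CS}^{(m+1)\times(D+1)}$ therefore satisfies $A_0Z_i=\bar{P}'$. The adjacency-based penalties along $\sigma$ again vanish on it, because they are still differences of identical columns. In Cases II and IV the loss and ${\mathcal{D}}_{\textrm{tot}}$ both split additively over the marginal outcomes, and each block is minimized independently. Applying the argument to each block and summing gives ${\mathcal{D}}_{\textrm{res}}\le{\mathcal{D}}_{\textrm{tot}}$ overall.

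The main points of care are two. First, one must note that ${\mathcal{D}}_{\textrm{res}}$ counts only the fit term and not the penalties. Second, the ``strict'' bounds in the statement are to be read as the non-strict inequalities $0\le R^2_W\le 1$.
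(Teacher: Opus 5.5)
Your proof is correct and matches the paper's argument essentially step for step. The paper likewise takes the matrix whose columns all equal $\bar{P}'$, notes that it is feasible with loss exactly $\mathcal{D}_{\textrm{tot}}$ and zero penalty, and then uses global optimality of the unregularized or regularized LP to get $\mathcal{D}_{\textrm{res}}\le\mathcal{D}_{\textrm{tot}}$. Your explicit use of $\mathcal{D}_{\textrm{res}}\ge 0$ for the upper bound and your extension to Cases II--IV are minor additions that the paper leaves implicit.
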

\begin{proof}
We define a ``null model'' transformation matrix ${A}_{\mathrm{base}} \in \mathbb{R}^{(m+1)\times(n+1)}$ where every single column is exactly equal to Fréchet mean $\bar{{P}}'$. Since $\bar{{P}}' \in \Delta_m$, ${A}_{\mathrm{base}}$ is a column-stochastic matrix. Consequently, ${A}_{\mathrm{base}}$ is a mathematically feasible solution within our optimization space.  Applying ${A}_{\mathrm{base}}$ to any predictor yields $({A}_{\mathrm{base}} {P}_i)_{j} = \sum_{k=1}^{n+1} (\bar{{P}}^\prime)_j ({P}_i)_k = (\bar{{P}}')_j$, since ${P}_i\in\Delta_n$. Thus, ${A}_{\mathrm{base}} {P}_i=\bar{P}'$ for any $i\in\left\{1,\ldots,N\right\}$, so that the unregularized empirical loss evaluated at ${A}_{\mathrm{base}}$ is exactly the total dispersion:
\begin{equation*}
\mathcal{L}({A}_{\mathrm{base}}) = \sum_{i=1}^N W_{1,\mathbf{a}}({A}_{\mathrm{base}} {P}_i, {P}'_i) = \sum_{i=1}^N W_{1,\mathbf{a}}(\bar{{P}}', {P}'_i) = \mathcal{D}_{\textrm{tot}}.    
\end{equation*}
The unregularized estimator $\hat{{A}}$ minimizes the empirical loss $\mathcal{L}({A})$ over $\mathcal{CS}^{(m+1)\times(n+1)}$. Since ${A}_{\mathrm{base}} \in \mathcal{CS}^{(m+1)\times(n+1)}$, global optimality guarantees that
$\mathcal{D}_{\textrm{res}} = \mathcal{L}(\hat{{A}}) \le \mathcal{L}({A}_{\mathrm{base}}) = \mathcal{D}_{\textrm{tot}}$.
This property extends to the regularized estimator $\hat{{A}}_{\mathrm{reg}}$, which minimizes $\mathcal{L}_{\mathrm{reg}}({A}) = \mathcal{L}({A}) + \mathrm{Pen}({A})$, where $\mathrm{Pen}({A})$  denotes the sum of the non-negative first and second-order regularization penalty terms introduced in Section \ref{regularization}:
\begin{equation*}
\mathrm{Pen}({A})=\lambda_1 \sum_{j=1}^n \sum_{k=1}^{m} a_k \left| F_{k,j+1} - F_{k,j} \right|
+ \lambda_2 \sum_{j=2}^{n} \sum_{k=1}^{m} a_k \left| F_{k,j-1} - 2F_{k,j} + F_{k,j+1} \right|.
\end{equation*}
Because all columns in ${A}_{\mathrm{base}}$ are identical, the cumulative distribution of each column is the same ($F_{k,j} = \bar{F}_k$). Therefore, the adjacent variations evaluated by the first and second-order penalty terms are identically zero, yielding $\mathrm{Pen}({A}_{\mathrm{base}}) = 0$. Hence, $\mathcal{L}_{\mathrm{reg}}({A}_{\mathrm{base}}) = \mathcal{L}({A}_{\mathrm{base}}) + 0 = \mathcal{D}_{\textrm{tot}}$. By the global optimality of $\hat{{A}}_{\mathrm{reg}}$, we get $\mathcal{L}_{\mathrm{reg}}(\hat{{A}}_{\mathrm{reg}}) \le \mathcal{L}_{\mathrm{reg}}({A}_{\mathrm{base}})$, which yields
$\mathcal{D}_{\textrm{res}}\leq \mathcal{D}_{\textrm{res}} + \mathrm{Pen}(\hat{{A}}_{\mathrm{reg}}) \le \mathcal{D}_{\textrm{tot}}$.
Dividing the inequality by $\mathcal{D}_{\textrm{tot}} > 0$ yields $0 \le \frac{\mathcal{D}_{\textrm{res}}}{\mathcal{D}_{\textrm{tot}}} \le 1$, which immediately gives $0 \le R^2_W \le 1$.
\end{proof}

The choice of the median as a baseline implies that $R^2_W$ must be interpreted as a relative improvement over a highly robust central tendency. Consequently, it is inherently more conservative than its Euclidean counterpart based on the mean; moderate values of $R^2_W$ (e.g., $0.3-0.5$) already indicate a substantial explanatory power.

While $R^2_W$ quantifies metric proximity in the Wasserstein geometry, it does not explicitly capture the preservation of the ordinal structure. Thus, when all variables are ordinal, we define an indicator evaluating the monotonicity of the mapping.
Let $\prec$ be any pre-order relation on the simplex (or product of simplices). Let $\textrm{sgn}({P}, {Q}) \in \{1, -1, 0\}$ be the sign of the ordinal relationship between two points according to $\prec$, defined as $\textrm{sgn}({P},{Q})=1$ if ${P}\prec {Q}$, $\textrm{sgn}({P},{Q})=-1$ if ${P}\succ {Q}$, and $0$ otherwise. 
The \emph{Order Preservation Index} (\textrm{OPI}) with respect to the partial order relation $\prec$ is defined as the proportion of data pairs whose ordinal rank is preserved by the model, evaluated among all pairs with distinct observed responses:
\begin{equation}
\textrm{OPI}_{\prec} = \frac{\left| \left\{ (i, j)\in\left\{1,\ldots,N\right\}^2 : i < j \text{ and } \textrm{sgn}({P}'_i, {P}'_j) \cdot \textrm{sgn}(f({P}_i), f({P}_j)) = 1 \right\} \right|}{\left| \left\{ (i, j)\in\left\{1,\ldots,N\right\}^2  : i < j \text{ and } {P}'_i \neq {P}'_j \right\} \right|}
\label{eq:OPI}
\end{equation}
In our context, a natural choice is to evaluate the index using the total Wasserstein order $\preceq_{W}$. The two proposed measures act in a complementary way: $R^2_W$ assesses whether the predictions are structurally \emph{close} to the target distributions, while $\textrm{OPI}_{\preceq_{W}}$ confirms whether the model correctly preserves the \emph{ordinal direction} of the variations, an essential requirement for ordinal regression tasks.

\subsection{Inference via Bootstrap}
To construct confidence regions for the estimated matrix $\hat{{A}}$, we rely on a non-parametric bootstrap procedure \citep{Efron}. We give the detailed constriction for the Case I, the other cases being similar. 
The procedure involves generating $R$ bootstrap datasets by resampling $N$ pairs with replacement from the original data. For each bootstrap sample $r\in\left\{1, \dots, R\right\}$, we re-estimate the matrix ${A}^*_r$ using the LP solver. This yields an empirical distribution of matrices $\{{A}^*_1, \ldots, {A}^*_R\}$.
We compute the distance between the original estimate $\hat{{A}}$ and each bootstrap estimate ${A}^*_r$ using the  $1$-Wasserstein distance for column-stochastic matrices (see Eq. \eqref{wass_matrix}):
\begin{equation}
     W_{1, \mathbf{a}}({A}^*_r, \hat{{A}}) = \sum_{j=1}^{n+1} W_{1, \mathbf{a}}(({A}^*_r)_{\cdot j}, \hat{{A}}_{\cdot j}) = \sum_{j=1}^{n+1} \sum_{k=1}^{m} a_k \left| F_{k}(({A}^*_r)_{\cdot j}) - F_{k}(\hat{{A}}_{\cdot j}) \right|.
\end{equation}
We define the radius of the $95\%$ confidence region, denoted by $r_{0.95}$, as the $95$-th percentile of the empirical distribution of distances $\{W_{1, \mathbf{a}}({A}^*_k, \hat{{A}})\}_{k=1}^R$. 
To facilitate interpretation across different model dimensions, we also compute a normalized radius, denoted by $\tilde{r}_{0.95}$. This is obtained by dividing $r_{0.95}$ by the theoretical diameter of the space, which represents the maximum possible distance between any two column-stochastic matrices:
\begin{equation}
\tilde{r}_{0.95} = \dfrac{r_{0.95}}{(n+1)\sum_{k=1}^m a_k}.
\end{equation}
Finally, a summary estimator for the confidence region can be derived by computing the Fréchet mean of the bootstrap estimates that lie within the closed ball of radius $r_{0.95}$
\begin{equation*}
\overline{B}\left(\hat{A}, r_{0.95}\right) =\left\{{A}^*_k \in \mathcal{CS}^{(m+1)\times(n+1)} : W_{1, \mathbf{a}}({A}^*_k, \hat{{A}}) \le r_{0.95}\right\}.
\end{equation*}

\section{Simulation studies}\label{sec4}

In this section, we investigate the performance of the proposed methodology under two distinct simulation settings, each designed to highlight different aspects of the approach.
In the first setting, the main objective is to benchmark our method against the CODALM R package \citep{codalm} from a predictive standpoint. To this end, we employ two proper scoring rules, namely the Ranked Probability Score (RPS) and the Brier Score (BS). Beyond simulation-based comparisons, we further assess the empirical behavior and stability of the competing methods through a K-fold cross-validation procedure applied to the Educ dataset (see Section \ref{sec5}). This dataset, also used in CODALM, is particularly relevant as it involves both compositional ordinal predictors and responses.
In the second setting, the comparison with CODALM is no longer pursued. Instead, the focus shifts to a deeper and more controlled evaluation of the proposed method itself. Specifically, we concentrate on Cases I and III, as Cases II and IV can be obtained as direct extensions of these two fundamental scenarios. The performance is assessed using metrics that are intrinsic to our framework. Namely, we shall use the \emph{normalized Residual Wasserstein Dispersion}, defined as
$n\mathcal{D}_{\textrm{res}}=\frac{\mathcal{D}_{\textrm{res}}}{N\left(a_{1}+\cdots + a_{m}\right)}$,  which captures the discrepancy between predicted and observed compositions, together with the $\textrm{OPI}_{\preceq_W}$ and $R^2_W$ indices, which respectively quantify the preservation of the ordinal structure and the proportion of variability explained by the model.

The  R code to generate data and perform simulations, is made available on a Github repository at the link \url{https://github.com/NicolaPiras97/OrdinalCompositionalData}. Furthermore, the proposed methodology is implemented in the \texttt{OrdinalCompositions} \citep{package} R package, which is publicly available for reproducible application of the proposed approach.

 \subsection{Ordinal Data Generating Processes}\label{gen}

The central element of the simulation design is the construction of
conditional distributions $P' \mid P$ that embed explicit ordinal structure. For each observation $i$, we define a continuous latent ordinal score $s_i = \sum_{k=1}^{n+1} k \, x_{ik}$. This score represents the expected ordinal location encoded by the distribution $P_i$. Since $P_i$ lies on the simplex, $s_i \in [1, n+1]$ and can be interpreted as the center of mass of $P_i$ along the ordered support. The conditional distribution of $P'_i$ is then constructed as a function of the distance between category index $j$ and the latent score $s_i$. Thus, ordinal structure is introduced by explicitly penalizing metric displacement on the support of $P'$. To generate sampling variability, once the target probability vector
$p_i = (p_{i1}, \dots, p_{im+1})$ is constructed,
we draw $P'_i \sim \text{Dirichlet}(\alpha \, p_i)$, where $\alpha > 0$ controls noise intensity. The conditional probability mass function is defined as:
\[
p_{ij} =
\frac{\exp\left(-\nu |j - s_i|\right)}
{\sum_{\ell=1}^{m+1} \exp\left(-\nu |\ell - s_i|\right)}.
\]
The distribution is unimodal and centered at $s_i$. The shape is symmetric around the latent score and $\nu$ controls concentration: with $\nu \to 0$ the distribution approaches uniformity.






\subsection{Simulation study: Setting 1}

To ensure a  robust comparison between the predictive performances of the two methodologies, we adopt the Ranked Probability Score (RPS)  (see \cite{Gneiting}) as a common evaluation metric on the validation set. The RPS is a strictly proper scoring rule specifically designed for probabilistic forecasts of ordinal variables. For any  $P=(x_{1},\ldots,x_{n+1}), Q=(y_{1},\ldots,y_{n+1}) \in \Delta_n$, it is defined as:
\begin{equation*}
\text{RPS}(P, Q) = \sum_{k=1}^{n} \left( F_k(P) - F_k(Q) \right)^2.
\end{equation*}
Together with the RPS we also consider the  Brier (BS) (see \cite{Brier}) defined as
\[
\text{BS}(P, Q)
=
\sum_{k=1}^{n+1}
(x_k - y_k)^2.
\]
The Brier score is strictly proper but treats categories as nominal, ignoring their ordering. We compare the performance of our model based on Wasserstein distance (WD) and the CODALM (COD) estimator in terms of RPS and BS.

Let $P \in \Delta_n$ and $P' \in \Delta_m$. Each Monte Carlo iteration consists of the following steps.

\subsubsection*{Step 1: Generation of the Initial Sample}
We generate $N$ independent observations as
$P_i \sim \text{Dirichlet}(1, \dots, 1)$,  $i = 1,\dots,N$. Conditional on each $P_i$, the response distribution $P'_i$ is generated according to the specified data-generating process (DGP).
Based on the sample $\{(P_i, P'_i)\}_{i\in\{1,\ldots,N\}}$, we estimate the matrix $A^{WD}$ using the simplex linear programming approach, and the matrix $A^{COD}$ using the COD algorithm.

\subsubsection*{Step 2: Generation of a New Sample}
An independent sample $\{P_i^{\text{new}}\}_{i=1}^N$ is drawn from the same Dirichlet distribution. Two alternative outcome samples are then generated as
\[
P_i^{'WD} \sim \text{Dirichlet}(\alpha \, A^{WD} P_i^{\text{new}}),
\quad
P_i^{'COD} \sim \text{Dirichlet}(\alpha \, A^{COD} P_i^{\text{new}}).
\]
This procedure induces two distinct evaluation settings, each corresponding to a specific underlying data-generating process (DGP):
\begin{itemize}
    \item[-] WD-DGP, if the true structure is WD-based;
    \item[-] COD-DGP, if the true structure is COD-based.
\end{itemize}

\subsubsection*{Step 3: Train/Test Split and evaluation Metrics}

The new sample is randomly split into 70\% training and 30\% validation. Both estimators are re-estimated on the training subsample and evaluated
out-of-sample on the validation set using RPS and BS. This procedure is described in detail in Algorithm 1 in Appendix C.

\medskip
Table~\ref{tab:winrates_N100} reports the empirical win rates of the WD and COD estimators under different data-generating processes and for various combinations of predictor and outcome categories $(n+1, m+1)$. Results are obtained via Monte Carlo simulations with sample sizes $N=100$ (Table~\ref{tab:winrates_N100}); additional results for $N=500$ are reported in Appendix D.
For each scenario, we report three win rates (see Appendix C for details):
\begin{itemize}
    \item[-] WD-DGP: proportion of replications in which WD outperforms COD when the data are generated under the WD-based DGP;
    \item[-] COD-DGP: proportion of replications in which WD outperforms COD when the data are generated under the COD-based DGP;
    \item[-] TRUE-DGP: cross-comparison, i.e., WD evaluated under its own DGP versus COD evaluated under its own DGP.
\end{itemize}
Increasing the sample size from $N=100$ to $N=500$ generally leads to improved win rates for the WD estimator, suggesting enhanced stability and identifiability of the WD methodology in larger samples. Under its own DGP, WD consistently achieves higher win rates than COD. When evaluated under the COD-DGP, the win rates of WD are typically close to $50\%$, indicating that WD retains a reasonable degree of robustness even when COD is the true operator.
RPS and BS exhibit qualitatively similar patterns. WD tends to dominate more clearly under RPS, particularly when the outcome support is larger; however, the relative ranking of the estimators remains broadly consistent across the two scoring rules. 

\begin{table}[!ht]
\centering
\caption{Win rates of WD method under different DGP ($N=100$).}\label{tab:winrates_N100}
\begin{tabular}{ccccccccc}
\toprule
 &  \multicolumn{2}{c}{WD-DGP} 
 & \multicolumn{2}{c}{COD-DGP}
 & \multicolumn{2}{c}{TRUE-DGP} \\
\cmidrule(lr){2-3}\cmidrule(lr){4-5}\cmidrule(lr){6-7}
$(n+1,m+1)$  & RPS & BS & RPS & BS & RPS & BS \\
\midrule

(3,3) 
        & 0.530 & 0.522 & 0.430 & 0.428 & 0.504 & 0.506 \\

\midrule

(7,7) 
        & 0.742 & 0.772 & 0.654 & 0.714 & 0.744 & 0.726 \\

\midrule

(5,5) 
       & 0.579 & 0.583 & 0.480 & 0.480 & 0.574 & 0.572 \\

\midrule

(3,5) 
       & 0.538 & 0.546 & 0.526 & 0.524 & 0.598 & 0.568 \\

\midrule

(5,3) 
        & 0.558 & 0.576 & 0.382 & 0.412 & 0.592 & 0.580 \\

\midrule

(5,7) 
       & 0.596 & 0.614 & 0.446 & 0.462 & 0.626 & 0.612 \\

\midrule

(7,5) 
       & 0.730 & 0.782 & 0.616 & 0.626 & 0.734 & 0.718 \\

\bottomrule
\end{tabular}
\end{table}

\subsection{Robustness to Sparsity and Anomalous Data}
A central element of the comparison between the Wasserstein-based (WD) and Kullback-Leibler-based (COD) approaches is their respective robustness to anomalous data and structural outliers. We designed a simulation study comprising three distinct structural perturbations, each intended to probe the limits of the estimators' robustness.
Throughout all scenarios, predictive performance is assessed using the Ranked Probability Score (RPS) and the Brier Score (BS). To summarise the comparative results, we report \textit{win rates}, defined as the proportion of Monte Carlo replications in which the WD-based method attains a lower validation error than COD.

\subsubsection{Scenario 1: unstructured sparsity}
To simulate baseline measurement error or accidental drop-out, we introduce unstructured sparsity in the validation set. From a theoretical standpoint, the superior robustness of the WD-based method originates in its loss function. In fact, minimising the 1-Wasserstein distance over ordered categories is geometrically equivalent to estimating a generalised median under an \(L_1\)-type norm on the cumulative distributions. In contrast, the COD approach minimises the Kullback-Leibler divergence, thereby acting as a Maximum Likelihood estimator of the conditional mean. 
We fix a proportion \(p \in [0, 0.2]\) of zero components relative to the total number of components. Let \(N\) denote the number of units in the validation set and \(m+1\) the dimension of each compositional vector. For a given \(p\), a total of \(p \cdot N \cdot (m+1)\) zeros are introduced. Units are randomly drawn so that the overall number of zeros matches the target. Within each selected unit, one component is set to zero when \(m+1=3\), while two components are zeroed when \(m+1 \in \{5, 7\}\); the compositions are then renormalised. For instance, with \(N = 30\), \(m+1 = 7\), and \(p = 0.2\), we introduce 42 zeros by randomly selecting 21 units and setting two of their components to zero.
The evolution of the average RPS as the proportion of zeros grows (Figure~\ref{fig:zeros}) reveals a clear divergence (see also Table 1 in Appendix D, which illustrates the robustness of our method). The WD-based approach displays stable behaviour, with RPS values staying nearly constant. Conversely, the COD-based method exhibits a systematic increase in RPS with \(p\), confirming that the KL-based estimator is highly sensitive to sparsity and suffers a progressive degradation of predictive accuracy.

\begin{figure}[!ht]
\centering
        \includegraphics[width=0.8\linewidth]{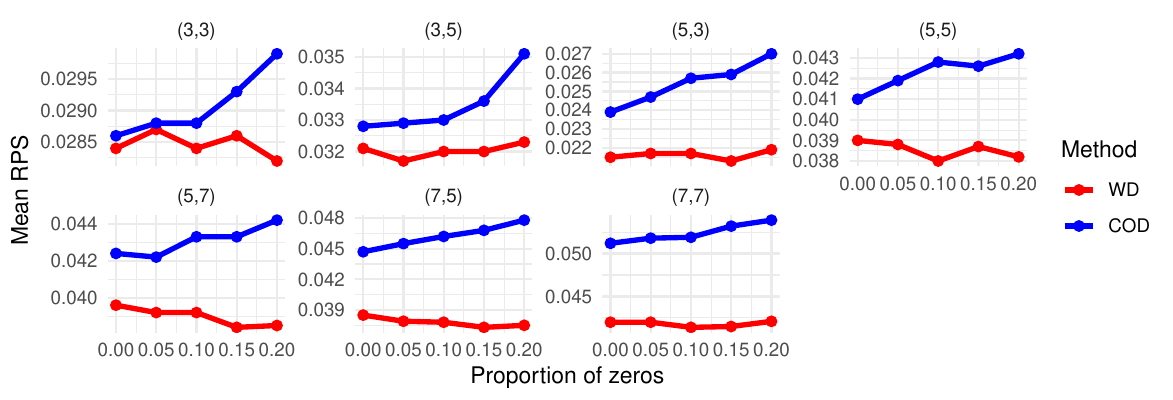}
       \caption{ Average RPS as a function of the proportion of zeros ($N=100$), evaluated under the TRUE-DGP. Each panel corresponds to a different $(n+1, m+1)$ configuration. Zeros are randomly selected.
} \label{fig:zeros}
       \end{figure}

\subsubsection{Scenario 2: mode-suppression stress test}
To evaluate the estimators' resilience under severe stress, we devise a worst-case adversarial contamination scheme. While random zero inflation acts as mild noise, explicitly forcing the \textit{largest} component (the mode) to zero aggressively destroys the primary correlation structure of the data. Methodologically, this constitutes a severe structural outlier that destroys the primary correlation structure of the data. From an applied perspective, the scenario models extreme anomalies such as ``protest votes'' or contrarian behaviour, where respondents deliberately avoid the expected consensus response.
Applying the same proportion logic (with \(p\) ranging from \(0\) to \(0.2\)), we now set to zero the largest component of each selected unit (or the two largest for \(m+1 \in \{5, 7\}\)), followed by renormalisation.
The results (see Table 2 in Appendix D) highlight several robust patterns. Under both the WD-DGP and the cross-DGP (TRUE-DGP) comparisons, the WD-based method consistently outperforms COD. Win rates are systematically above 0.5 and increase monotonically with the proportion of zeros, often exceeding 0.9. Figure~\ref{fig:zeros2} provides visual confirmation of the theoretical expectation: COD models tend to be sensitive to unexpected near-zero mass events, as the heavily penalised likelihood forces substantial adjustments in the regression coefficients and leads to a considerable increase in RPS. In contrast, the WD method remains stable, reflecting how its underlying \(L_1\)-geometry mimics a robust median estimator and naturally limits the influence of extreme structural outliers.

\begin{figure}[!ht]
\centering
        \includegraphics[width=0.8\linewidth]{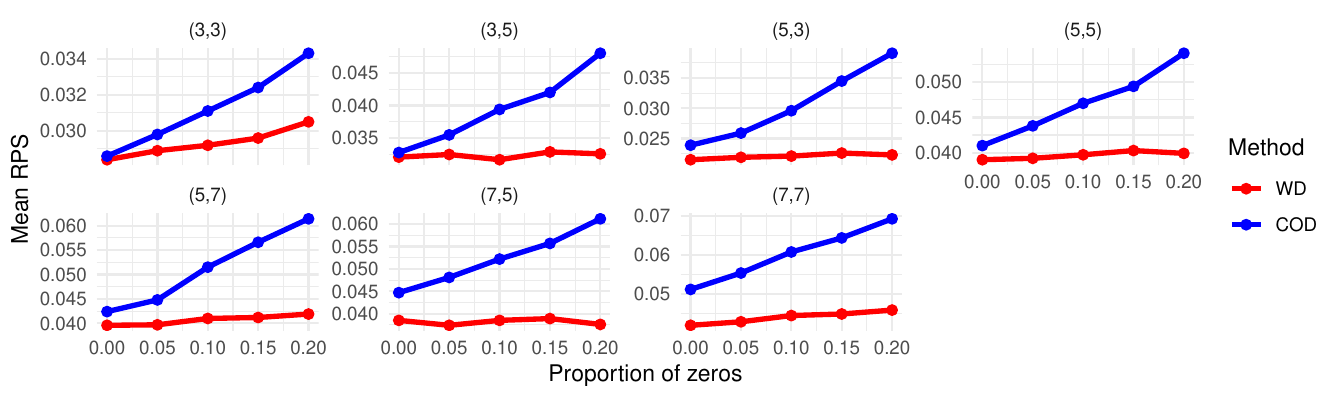}
       \caption{ Average RPS as a function of the proportion of zeros ($N=100$), evaluated under the TRUE-DGP. Each panel corresponds to a different $(n+1, m+1)$ configuration. The largest component in the selected data is fixed to zero.
} \label{fig:zeros2}
       \end{figure}

\subsubsection{Scenario 3: ordinal misunderstanding}
Finally, we evaluated the method's robustness to an outlier type highly specific to ordinal data, the scale inversion. For a given proportion of observations (\(p\) ranging from \(0\) to \(0.4\)) in the validation set, the order of the components in the response vector is completely reversed (e.g., \(P = (x_1, \dots, x_{m+1})\) becomes \(P' = (x_{m+1}, \dots, x_1)\)). This mimics situations where respondents misunderstand the direction of the ordinal scale, or where severe data-entry errors occur.
Because likelihood-based methods like COD treat categories as nominal and ignore the underlying metric space, they are highly sensitive to such symmetric probability shifts. As illustrated in Figure~\ref{fig:inv} and reported in Table 3 in Appendix D, the COD method suffers a substantial predictive failure as the inversion proportion increases. The WD-based method, by contrast, maintains strong and stable performance.

\begin{figure}[!ht]
\centering
        \includegraphics[width=0.8\linewidth]{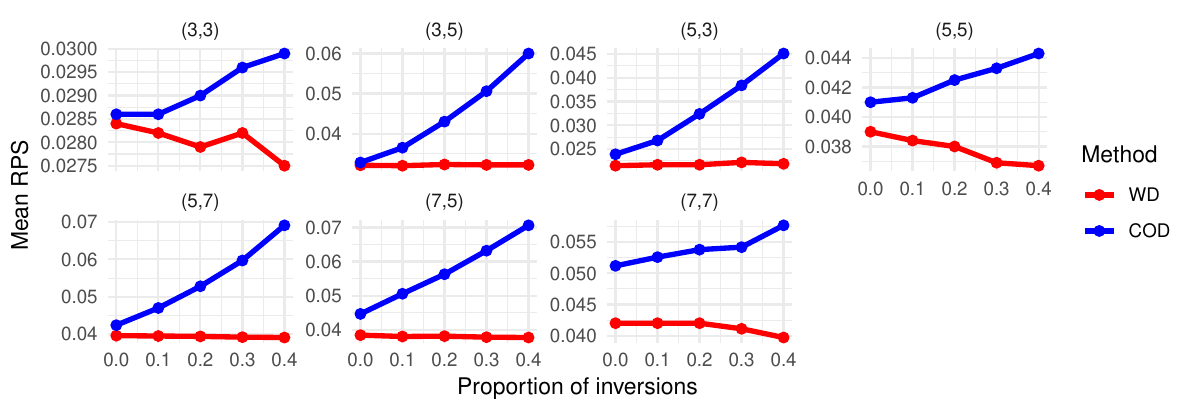}
       \caption{ Average RPS as a function of the proportion of inversions ($N=100$), evaluated under the TRUE-DGP. Each panel corresponds to a different $(n+1, m+1)$ configuration.
} \label{fig:inv}
       \end{figure}

\subsection{Simulation study: Setting 2}
We now evaluate the Wasserstein regression model independently of the COD estimator. Unlike the previous simulation---where samples were drawn under an operator enforcing linearity between predictors and outcomes---here we generate data directly from the target DGP and estimate the regression operator without imposing a linear mapping. 
The goal is to assess predictive accuracy under different weighting schemes, through metrics that capture full distributional accuracy ($n\mathcal{D}_{\textrm{res}}$), pairwise ordinal concordance (\(\textrm{OPI}_{\preceq_{W}}\)), and explained variation (\(R^2_W\)).

\subsubsection{Case I}

We first consider the core model with a single compositional covariate. Compositional ordinal data are generated following the procedure described in Section~\ref{gen}. From these data, we estimate the regression matrix by solving a linear programming problem that minimises the Wasserstein cost under the appropriate marginal constraints. Table~\ref{tab:w1_predictive} reports the values of $n\mathcal{D}_{\textrm{res}}$, together with \(\textrm{OPI}_{\preceq_{W}}\) and \(R^2_W\), across different dimensional configurations. The normalised  residual Wasserstein dispersion  is consistently small, ranging from 0.04 to 0.09 across all settings, indicating satisfactory predictive accuracy. The \(\textrm{OPI}_{\preceq_{W}}\) values are high overall (0.78--0.93), confirming that the model preserves pairwise ordinal concordance to a substantial degree, while \(R^2_W\) is moderate (0.30--0.45), suggesting that the covariate captures a meaningful, though not exhaustive, share of the distributional variation.
Performance tends to benefit from larger support sizes \((n+1, m+1)\), as expected given the richer latent structure. The highest \(\textrm{OPI}_{\preceq_{W}}\) (0.936) and \(R^2_W\) (0.452) are observed for the \((3,3)\) configuration, whereas the asymmetric setting \((5,3)\) records the lowest \(\textrm{OPI}_{\preceq_{W}}\) (0.779), a result likely attributable to the information loss incurred when projecting onto a lower-dimensional outcome space.
Results for \(N=500\), reported in Appendix D, confirm the stability of both \(R^2_W\) and $n\mathcal{D}_{\textrm{res}}$, and reveal a marked increase in the proportion of preserved orderings as the sample size grows.

\begin{table}[!ht]
\centering
\caption{Predictive performance of $n\mathcal{D}_{\textrm{res}}$, $\textrm{OPI}_{\preceq_{W}}$ and $R^2_W$, $N=100$, Setting 2, case I (standard dev. in parentheses) }\label{tab:w1_predictive}
\begin{tabular}{ccccc}
\toprule
$(n+1,m+1)$ & $R^2_W$ & $n\mathcal{D}_{\textrm{res}}$ & $\textrm{OPI}_{\preceq_{W}}$\\
\midrule

(3,3) 
 & 0.452 (0.039) & 0.074 (0.004) & 0.936 (0.010)  \\

\midrule

(7,7)
 & 0.297 (0.025) & 0.068 (0.004) & 0.918 (0.017)  \\

\midrule

(5,5)
 & 0.359 (0.033) & 0.072 (0.003) & 0.933 (0.015)  \\

\midrule

(3,5)
 & 0.446 (0.038) & 0.042 (0.002) &  0.934 (0.010)  \\

\midrule

(5,3)
 & 0.345 (0.051) & 0.068 (0.006) & 0.779 (0.026) \\

\midrule

(5,7)
 & 0.353 (0.033) & 0.049 (0.003) & 0.932 (0.015)  \\

\midrule

(7,5)
 & 0.290 (0.027) & 0.088 (0.005) & 0.910 (0.017) \\

\bottomrule
\end{tabular}
\end{table}
















\subsubsection{Case III: Two Compositional Covariates}
We extend the simulation design to assess the performance of the Wasserstein regression model in the presence of two compositional ordinal covariates.
Let \(P=(x_1,\dots ,x_{n+1})\in \Delta_n\), \(Q=(y_1,\dots ,y_{h+1}) \in \Delta_h\), and \(P'=(x'_1,\dots ,x'_{m+1}) \in \Delta_m\).
The regression model is estimated using the tensor representation \(Z = P \otimes Q\), solving the optimal transport regression problem under fixed ground weights. Both covariates are independently drawn from a uniform Dirichlet distribution: \(P \sim \text{Dir}(\mathbf{1})\) and \(Q \sim \text{Dir}(\mathbf{1})\).
For each observation \(i\), we compute the latent scores \(s_{1,i} = \sum_{k=1}^{n+1} k \, x_{ik}\) and \(s_{2,i} = \sum_{k=1}^{h+1} k \, y_{ik}\), and define their weighted combination \(s_i = w_1 s_{1,i} + w_2 s_{2,i}\), with \(w_1, w_2 > 0\) and \(w_1 + w_2 = 1\) (in the simulations presented below, \(w_1=0.6\) and \(w_2=0.4\)). Conditional on \(s_i\), the response distribution over the \(m+1\) ordinal categories is specified via a Laplace kernel by
$p_{ij} \propto \exp\!\left(-\nu |j - s_i|\right)$,  $j=1,\dots ,m+1$.
The final compositional response is then obtained as
$P'_i \sim \text{Dirichlet}(\alpha \, p_i)$,
where the concentration parameter \(\alpha\) governs the dispersion around the latent distribution.
Table~\ref{tab:w2_predictive} reports values of $n\mathcal{D}_{\textrm{res}}$, together with \(\textrm{OPI}_{\preceq_{W}}\) and \(R^2_W\), across different dimensional configurations. The values of $n\mathcal{D}_{\textrm{res}}$  remain consistently low, ranging from 0.02 to 0.05 across all settings, confirming the model's strong predictive accuracy also in the two-covariate case. The \(\textrm{OPI}_{\preceq_{W}}\) values are high throughout (0.89--0.92), indicating that ordinal concordance is well preserved even under the more complex tensor representation. At the same time, \(R^2_W\) reaches moderate-to-substantial levels (0.40--0.58), notably higher than those observed in the single-covariate setting, which suggests that the inclusion of a second compositional predictor yields a meaningful gain in explained distributional variation.
Configurations with symmetric or nearly symmetric dimensions, such as \((4,4,4)\) and \((3,5,7)\), attain the highest \(R^2_W\) (around 0.58), while the largest support size \((7,7,7)\) records a slightly lower \(R^2_W\) (0.402), plausibly reflecting the greater difficulty in recovering the full distributional structure when the outcome space is richer. The asymmetric configuration \((7,3,5)\) produces the lowest \(\textrm{OPI}_{\preceq_{W}}\) (0.896) and the highest $n\mathcal{D}_{\textrm{res}}$ (0.053), a pattern consistent with the information loss induced by compressing one covariate onto a low-dimensional support.
Results for \(N=500\), reported in Appendix D, confirm the overall stability of both \(R^2_W\) and $n\mathcal{D}_{\textrm{res}}$, and show a substantial increase in the proportion of preserved orderings as the sample size grows.

\begin{table}[!ht]
\centering
\caption{Predictive performance of $n\mathcal{D}_{\textrm{res}}$, $\textrm{OPI}_{\preceq_{W}}$ and $R^2_W$, $N=100$, Setting 2, case III (standard dev. in parentheses)}\label{tab:w2_predictive}
\begin{tabular}{ccccc}
\toprule
$(n+1,h+1,m+1)$ & $R^2_W$ & $n\mathcal{D}_{\textrm{res}}$  & $\textrm{OPI}_{\preceq_{W}}$  \\
\midrule

(5,5,5) 
 & 0.477 (0.029) & 0.042 (0.002) &  0.915 (0.012)  \\

\midrule

(4,4,4)
& 0.581 (0.023) & 0.043 (0.002) & 0.920 (0.009) \\

\midrule

(7,7,7)
& 0.402 (0.021) & 0.041 (0.002) & 0.913 (0.013) \\

\midrule

(3,5,7)
& 0.578 (0.024) & 0.021 (0.001) & 0.914 (0.009) \\

\midrule

(7,3,5)
& 0.417 (0.027) & 0.053 (0.003) & 0.896 (0.016)\\

\bottomrule
\end{tabular}
\end{table}












\section{Application: Educational attainment of parents in Europe}
\label{sec5}

We apply our metho-dology to a real-world dataset concerning the educational attainment of parents. We use the \texttt{educFM} dataset, available in the \texttt{robCompositions} R package, which contains the compositional breakdown of education levels for mothers and fathers across 31 European countries.
The data consists of three ordered categories: `high' (tertiary education), `medium' (upper secondary and post-secondary non-tertiary) and `low' (pre-primary, primary, or lower secondary). Our goal is to model the educational composition of fathers (${P}'_i \in \Delta_2$) as a linear function of the educational composition of mothers (${P}_i \in \Delta_2$). Given the clear ordinal nature of the response variable, this problem is perfectly suited for our Wasserstein-based approach.

Before fitting the regression model, we first quantify the underlying association in the raw data using the Ordinal Compositional Correlation. 
For the \texttt{educFM} dataset, we find $\text{OCC} = 0.8516$.
This strong positive correlation indicates that countries where mothers have a higher average educational attainment also tend to be the countries where fathers have a higher average educational attainment. 
We applied our regression model to estimate the transformation matrix $A$. The resulting point estimate $\hat{{A}}$, obtained with a selected regularization parameter $\lambda_1=0.1$ and $\lambda_2$ fixed to 0, is:
\begin{equation}
    \hat{{A}} = 
    \begin{bmatrix}
        0.8814 & 0.0561 & 0.0387  \\
        0.1186 & 0.9439 & 0.0282  \\
        0.0000 & 0.0000 & 0.9331 
    \end{bmatrix}
    \label{eq:educ_matrix}
\end{equation}
where we have used unitary weights for the 1-Wasserstein distance. The rows and columns of the matrix correspond to the categories (`high', `medium', `low') for fathers and mothers, respectively. The interpretation of this matrix is direct and intuitive. Each column represents the predicted educational composition of fathers given a 100\% concentration of mothers in that category. For instance, the third column implies that if the mothers in a country all have `low' educational attainment, the model predicts the fathers' composition would be (3.87\% `high', 2.82\% `medium', 93.31\% `low'). The matrix is strongly diagonal dominant, which is the key finding. This indicates a powerful positive association, a phenomenon often termed \emph{assortative mating} where individuals tend to partner with those of similar educational status.

To assess the variability of the estimated regression operator, a non-parametric bootstrap procedure with $R=1000$ resamples was performed. For each bootstrap replicate, the regression matrix was re-estimated and its Wasserstein distance from the original estimate was computed. Retaining only the matrices within the 95\% radius (where we have used the 1-Wasserstein distance among column-stochastic matrices given by \eqref{wass_matrix}) yields a subset of admissible solutions, whose columns are represented as compositional points in the simplex. The 95\% percentile confidence region has a normalized radius of 0.4202 and the Fréchet mean is
\begin{equation*}
    \hat{{A}} = 
    \begin{bmatrix}
        0.8055 & 0.1107 & 0.0350 \\
        0.1184 & 0.8832 & 0.0413 \\
        0.0761 & 0.0061 & 0.9237
    \end{bmatrix}
\end{equation*}
Figure~\ref{educplot} displays the resulting bootstrap confidence clouds. For graphical convenience, the simplices corresponding to the three columns (High, Medium, and Low) are jointly represented in a single triangular diagram, despite being conceptually distinct. Each cloud corresponds to one column of the regression matrix and represents the variability of the associated compositional effect. The dispersion of the points reflects the uncertainty of the estimates, while their relative location in the simplex highlights the dominant components. Overall, the concentration of the clouds suggests a stable estimation, with limited variability across bootstrap samples. A larger dispersion is observed for the first column (High), whose confidence cloud exhibits greater spread toward the top vertex, indicating increased uncertainty in the corresponding compositional effect.\\
\begin{figure}[!ht]
\centering
        \includegraphics[width=0.6\linewidth]{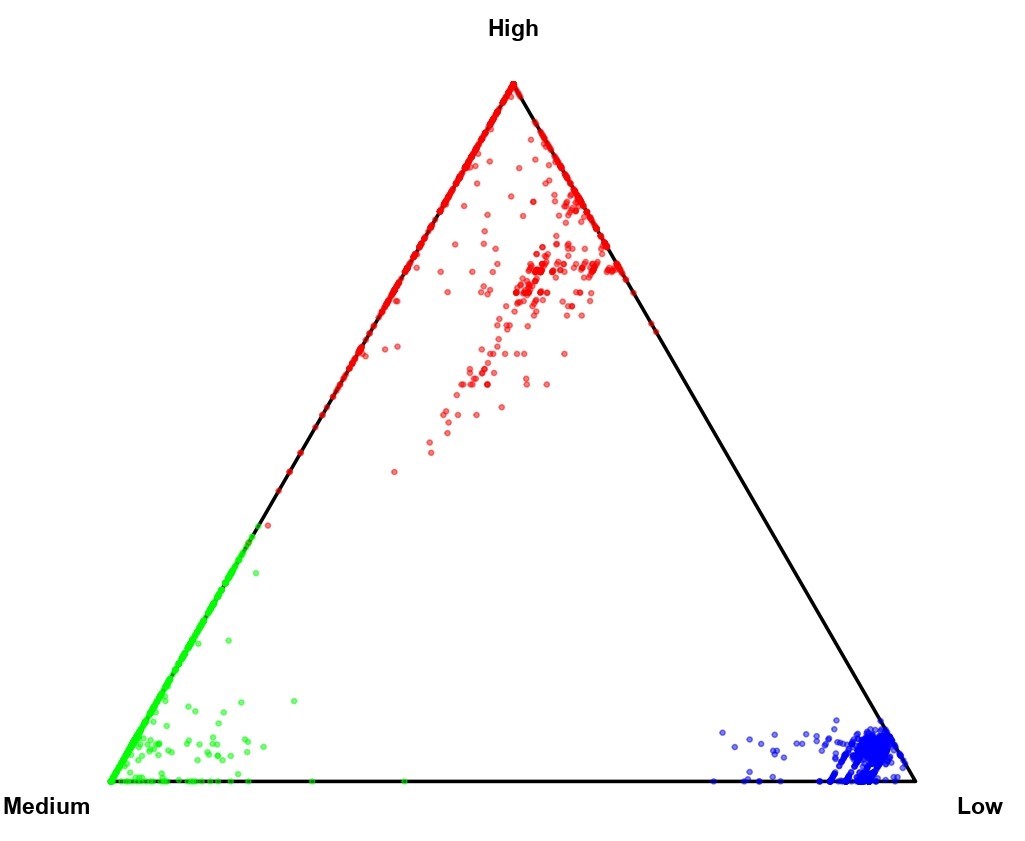}
       \caption{Bootstrap confidence clouds for the columns of the estimated regression matrix. Each point represents a bootstrap replicate falling within the 95\% Wasserstein radius from the original estimate. The simplex is displayed with vertices corresponding to High (top), Medium (left), and Low (right) categories. Colors identify the three columns of the regression operator, illustrating their variability and relative compositional structure.} \label{educplot}
       \end{figure}
To construct the $95\%$ bootstrap confidence intervals for each father's education level reported in Table~\ref{tab:conf}, we rely on the total Wasserstein order (see  Section \ref{sec1}). 
For each bootstrap replication $r = 1, \dots, R$, we obtain a bootstrap estimate $\hat{A}^{(r)}$. Each column in the collection $\{ \hat{A}^{(1)}, \dots, \hat{A}^{(R)} \}$ is then ordered according to the total Wasserstein order.
Using this ordering, we compute, the $2.5\%$ empirical quantile (lower bound), the Wasserstein median and the $97.5\%$ empirical quantile (upper bound).
The use of the total Wasserstein order ensures that bootstrap quantiles are computed in a way that is coherent with the geometry of the response space. Unlike component-wise intervals, which ignore dependence and ordering across categories, Wasserstein-based intervals preserve the cumulative structure of ordinal distributions.
The results show a strong diagonal dominance pattern: higher parental education levels are associated with higher probability mass concentrated in the upper response categories. This pattern is stable across the bootstrap distribution, as indicated by relatively tight confidence intervals in the dominant components.
Moreover, the intervals are asymmetric in several cases. This reflects the nonlinear geometry of the simplex under the $W_1$ metric and suggests that uncertainty is not uniformly distributed across categories. In particular, off-diagonal entries exhibit wider variability, indicating greater sensitivity to sampling fluctuations in transition probabilities between adjacent ordinal levels.

\begin{table}[ht]
\centering
\caption{95\% bootstrap confidence intervals for each mother's education level, R=10000.}\label{tab:conf}
\small
\resizebox{\columnwidth}{!}{%
\begin{tabular}{c|ccc|ccc|ccc}
\hline
& \multicolumn{3}{c|}{Mother's Educ. `low'} 
& \multicolumn{3}{c|}{Mother's Educ. `medium'}
& \multicolumn{3}{c}{Mother's Educ. `high'} \\
\hline
Father’s Educ. & Inf & Median & Sup 
& Inf & Median & Sup 
& Inf & Median & Sup \\
\hline
`low' 
& 1.0000 & 0.7013 & 0.3238 
& 0.3314 & 0.0761 & 0.0000 
& 0.0165 & 0.0525 & 0.0000 \\

`medium' 
& 0.0000 & 0.2987 & 0.5078 
& 0.6686 & 0.9144 & 0.9383 
& 0.1645 & 0.0066 & 0.0578 \\

`high' 
& 0.0000 & 0.0000 & 0.1684 
& 0.0000 & 0.0095 & 0.0617 
& 0.8189 & 0.9409 & 0.9422 \\
\hline
\end{tabular}
}
\end{table}
To evaluate the overall performance of the fitted model, we calculated the Wasserstein R-squared and the Order Preservation Index, obtaining
$R^2_{W} = 0.5296$ and $\textrm{OPI}_{\preceq_{W}}=0.8495$.
Finally we compare the proposed method with CODALM. We implement a repeated $K$-fold cross-validation procedure. Let $\{(P_i, P'_i)\}_{i=1}^N$ denote the observed sample, where $P_i \in \Delta_n$ represents the mother’s education distribution and $P'_i \in \Delta_m$ the father’s education distribution, both defined over ordered categories. The evaluation procedure is as follows:

\begin{enumerate}
\item The sample is randomly partitioned into $K=5$ folds of approximately equal size.
\item For each fold $k$:
\begin{itemize}
\item[-] the matrix $A$ is estimated using the data from the remaining $K-1$ folds (training set);
\item[-] out-of-sample predictions are computed for the held-out fold (test set) as $\hat{P'}_i = A P_i$.
\end{itemize}
\item The entire procedure is repeated multiple times (50) using different random partitions.
\end{enumerate}

This approach ensures that each observation is evaluated strictly out-of-sample,
providing an unbiased assessment of predictive performance. The results highlight a nuanced comparison between the proposed method and CODALM.
As shown in Table \ref{educ_comparison}, under the RPS our method exhibits a  similar  mean error, with a small positive mean difference. However, its win rate exceeds 52.52\%, suggesting that the two methods perform comparably when evaluated using a strictly proper ordinal scoring rule.
A similar pattern emerges under the Brier score. Although WD shows a marginally higher average error, it outperforms COD in more than $54.19\%$ of individual comparisons. Given that the Brier score does not account for the ordinal structure of the outcome, this result indicates that the proposed approach retains competitive probabilistic calibration even when the ordering information is disregarded.
\begin{table}[!ht]
\centering
\caption{Out-of-sample performance based on repeated 5-fold cross-validation.}\label{educ_comparison}
\begin{tabular}{lcccc}
\hline
Score & Method & Mean & Mean Difference & Win Rate \\
\hline
\multirow{2}{*}{RPS}  & WD & 0.0112 & \multirow{2}{*}{0.0009} & \multirow{2}{*}{52.52 \%} \\
      & COD   & 0.0103 &                        &       \\
\hline
\multirow{2}{*}{BS} & WD & 0.0167 & \multirow{2}{*}{0.0010} & \multirow{2}{*}{54.19 \%} \\
      & COD  & 0.0156 &                        &       \\
\hline
\end{tabular}
\end{table}

\section{Application to European quality of life data}
\label{sec6}
We apply our methodology to the Eurostat dataset \textit{``Quality of Life in European Cities''}, which surveys resident satisfaction across various urban dimensions (available at the website \url{https://ec.europa.eu/regional_policy/information-sources/maps/quality-of-life_en}). 
The dataset provides information on 83 European cities. Original responses were measured on a five-point Likert scale: \textit{Very satisfied} (VS), \textit{Rather satisfied} (RS), \textit{Rather unsatisfied} (RU), \textit{Very unsatisfied} (VU), and \textit{Don't know/No answer}. For this analysis, we excluded the \textit{Don't know/No answer} category due to its negligible frequency and re-normalized the remaining proportions to sum to unity.
The primary objective is to model the general satisfaction level of residents, measured by the survey item ``I am satisfied to live in my city'', which we treat as the response variable $Y$. For each city $i = 1, \dots, N$ (with $N=83$), the empirical distribution of responses to this item forms an ordinal composition denoted by ${P}'_i \in \Delta_3$. We aim to model the response composition ${P}'_i$ as a linear function of several covariates, representing specific urban services and lifestyle aspects. Among the available indicators, we initially selected the following survey variables to test their relationship with $Y$:
\begin{itemize}
    \item[-] ``Health care services, doctors and hospitals''
    \item[-] ``Green spaces such as parks and gardens''
    \item[-] ``The quality of the air''
    \item[-] ``It is easy to find a good job in my city''
    \item[-] ``I feel safe walking alone at night in my city''
    \item[-] ``It is easy to find good housing in my city at a reasonable price''.
\end{itemize}
We evaluated several regression models using different combinations of these covariates. The best-performing model, selected according to the $R_W^2$ metric, incorporates two specific explanatory variables, which we denote as $X_1$ and $X_2$. Specifically, the abstract variables and their corresponding compositional representations for the $i$-th city are defined as follows:
\begin{itemize}
    \item[-] $Y$: ``I am satisfied to live in my city'', with observed compositions ${P}'_i \in \Delta_3$;
    \item[-] $X_1$ (Covariate 1): ``I feel safe walking alone at night in my city'', with observed compositions ${P}_i \in \Delta_3$;
    \item[-] $X_2$ (Covariate 2): ``The quality of the air'', with observed compositions ${Q}_i \in \Delta_3$.
\end{itemize}
Before fitting the regression model, we assessed the monotonic association among these variables using the Ordinal Compositional Correlation (\textrm{OCC}) introduced in Section \ref{sec2}. Evaluated over the entire sample of 83 cities, the coefficients are: $\textrm{OCC}(X_1, Y) = 0.7326$,
     $\textrm{OCC}(X_2, Y) = 0.7420$,
      $\textrm{OCC}(X_1 \otimes X_2, Y) = 0.8284$.
Using our LP-based regression model with regularization parameters $\lambda_1=0$ $\lambda_2=0.1$, we estimated the transformation matrix $A$. The resulting point estimate, $\hat{A}$, is:
\[
\resizebox{\columnwidth}{!}{$
\hat{A} =
\left[
\begin{array}{cccccccccccccccc}
0.972 & 0.900 & 0.814 & 0.728 & 0.642 & 0.556 & 0.470 & 0.384 & 0.299 & 0.213 & 0.170 & 0.128 & 0.085 & 0.057 & 0.028 & 0.000 \\
0.000 & 0.005 & 0.041 & 0.131 & 0.282 & 0.433 & 0.524 & 0.616 & 0.701 & 0.787 & 0.830 & 0.770 & 0.588 & 0.392 & 0.196 & 0.000 \\
0.000 & 0.072 & 0.127 & 0.128 & 0.069 & 0.009 & 0.005 & 0.000 & 0.000 & 0.000 & 0.000 & 0.102 & 0.258 & 0.412 & 0.405 & 0.397 \\
0.028 & 0.023 & 0.018 & 0.013 & 0.007 & 0.002 & 0.001 & 0.000 & 0.000 & 0.000 & 0.000 & 0.000 & 0.069 & 0.139 & 0.371 & 0.603
\end{array}
\right]
$}
\]
where the weights used are $a_1=2, a_2=1, a_3=2$ for the response variable and the same weighting structure is adopted for both covariates in the tensor-product construction. The choice of the symmetric configuration $(2,1,2)$ reflects the ordinal structure of the categories while assigning greater importance to discrepancies occurring at the extreme levels of the distributions. In particular, the symmetric weighting configuration $(2,1,2)$ reflects the ordinal interpretation of the categories by assigning greater importance to transitions between extreme and intermediate states than to transitions between the two intermediate categories. In other words, moving from ``very satisfied'' to ``rather satisfied'', or from ``very dissatisfied'' to ``rather dissatisfied'', is considered more substantial than moving between ``rather satisfied'' and ``rather dissatisfied''. 
The symmetry of the weighting scheme guarantees that no directional preference is introduced between the positive and negative sides of the scale, while still emphasizing the stronger semantic separation associated with the extreme categories. 
The same weighting structure is adopted for both the response variable and the covariates in order to preserve coherence across all distributions involved in the analysis, since they are measured on analogous ordinal scales. This ensures that the Wasserstein geometry induced on both the predictor and response spaces reflects the same interpretation of category transitions and the same notion of ordinal discrepancy.
The rows of the matrix correspond to the categories (`VS', `RS', `RU', `VU') for the overall satisfaction while the columns correspond to the 16 combinations of modalities of covariates 1 (night-time safety) and 2 (air quality) described in Table \ref{comb}.
\begin{table}[!ht]\centering      %
\medskip
\caption{The 16 combinations of the categories of the two covariates $X_1$, $X_2$ in the order in which they appear in $\hat{{A}}$.}\label{comb}
\begin{tabular}{c|cc||c|cc}
\toprule[0.09 em]
{Column} & {$X_1$} & {$X_2$} & {Column} & {$X_1$} & {$X_2$}\\
\midrule
1 & VS & VS & 9 & RU & RS\\ 
2 & VS & RS & 10 & VU & VS\\ 
3 & RS & VS & 11 & RU & RU\\ 
4 & VS & RU & 12 & RS & VU\\ 
5 & RU & VS & 13 & VU & RS\\ 
6 & RS & RS & 14 & RU & VU\\ 
7 & VS & VU & 15 & VU & RU\\ 
8 & RS & RU & 16 & VU & VU\\ 
\bottomrule[0.09 em]
\end{tabular} 
\end{table}
For instance, the third column can be interpreted as follows: if the persons in a city all are Rather satisfied about the safety they feel walking alone in the night and Very satisfied about the air quality, the model predicts the satisfaction's composition would be (81.4\% `VS', 4.1\% `RS', 12.7\% `RU' and 1.8\% `VU'). The tenth column means that if the persons in a city all are Very unsatisfied about the safety they feel walking alone in the night and Very satisfied about the air quality, the model predicts the satisfaction's composition would be (21.3\% `VS', 78.7\% `RS', 0\% `RU' and 0\% `VU').
The estimated matrix $\hat{A}$ is most naturally interpreted column-wise, as each column represents a discrete probability distribution over the response categories. From this perspective, the matrix exhibits a coherent and structured pattern, with several relevant features that support both interpretability and model validity.
Early columns are highly concentrated on the first category, with probabilities exceeding $0.70$ and above $0.90$ in the first and second ones. These configurations are therefore strongly associated with low outcome levels (satisfaction), indicating high certainty in the prediction.
Intermediate columns display a shift of mass towards the second category, which often becomes dominant (around $0.60$--$0.80$). These represent a transition region, where the model captures intermediate states of the response variable. Later columns exhibit a substantial increase in higher categories (dissatisfaction). Most columns are characterized by sparse distributions, where one category clearly dominates and the remaining categories have very small or zero probability. This behavior enhances interpretability and reflects the effect of regularization, which prevents overly diffuse or noisy distributions.
Some columns are nearly identical, most notably the numbers eight and nine. These columns share almost the same probability distribution, indicating that different combinations of input components can lead to equivalent predictions. This reflects an inherent symmetry in the tensor product construction of the predictors.
When fixing one dimension and progressively worsening the other, the associated distributions exhibit a gradual and coherent shift of probability mass from lower to higher categories. This behavior confirms that the model preserves the expected ordinal structure, as deteriorations in one component of the predictors translate into systematically worse outcomes in the response distribution.
While the overall structure is coherent, some minor deviations can be observed with small amounts of probability mass occasionally appear in non-dominant categories.
These features reflect the model's ability to capture heterogeneity in the data. Slight departures from strict monotonicity are therefore expected in empirical settings.
Some mild irregularities can be observed in specific configurations, such as column eleven, which remains relatively concentrated in the more positive categories despite representing less favorable conditions. This may appear slightly optimistic at first glance.
However, this pattern can be explained by the nature of the data. The questionnaire from which the data are derived includes multiple items related to overall life satisfaction and the quality of various services and living conditions. As a result, the estimated distributions may reflect not only the specific dimensions considered in the model (e.g., air quality and nighttime safety), but also residual effects of other unobserved factors contributing to overall well-being. In this sense, even when certain dimensions are evaluated negatively, individuals may still report relatively high levels of satisfaction due to compensating aspects of their living environment. The model captures this phenomenon through a partial persistence of probability mass in the higher categories, leading to slightly optimistic distributions in some configurations.

To assess the stability of these estimates, we performed a non-parametric bootstrap using Wasserstein order with $R=2000$ replicates. The bootstrap median and 95\% percentile confidence intervals (inf and sup) for each column are shown in the following matrices $A^{(\textrm{med})}, A^{(\textrm{inf})}, A^{(\textrm{sup})}$ respectively:
\[
\resizebox{\columnwidth}{!}{$
A^{(\textrm{inf})} =
\left[
\begin{array}{cccccccccccccccc}
    0.541 & 0.695 & 0.695 & 0.578 & 
    0.537 & 0.362 & 0.214 & 0.218 &
    0.004 & 0.000 & 0.000 & 0.000 &
    0.000 & 0.000 & 0.000 & 0.000 \\
    0.128 & 0.210 & 0.226 & 0.042 &
    0.359 & 0.609 & 0.057 & 0.782 &
    0.973 & 0.879 & 0.873 & 0.744 &
    0.479 & 0.421 & 0.091 & 0.000 \\
    0.285 & 0.060 & 0.049 & 0.233 &
    0.010 & 0.014 & 0.708 & 0.000 &
    0.023 & 0.121 & 0.123 & 0.220 &
    0.418 & 0.326 & 0.528 & 0.227 \\
    0.046 & 0.035 & 0.030 & 0.047 &
    0.004 & 0.015 & 0.021 & 0.000 &
    0.000 & 0.000 & 0.004 & 0.036 &
    0.103 & 0.253 & 0.381 & 0.773
\end{array}
\right]
$}
\]
\[
\resizebox{\columnwidth}{!}{$
A^{(\textrm{med})} =
\left[
\begin{array}{cccccccccccccccc}
    0.914 & 0.887 & 0.807 & 0.754 & 
    0.683 & 0.591 & 0.460 & 0.387 &
    0.310 & 0.209 & 0.168 & 0.082 &
    0.052 & 0.061 & 0.000 & 0.000 \\
    0.000 & 0.016 & 0.128 & 0.175 &
    0.250 & 0.394 & 0.534 & 0.613 &
    0.659 & 0.786 & 0.778 & 0.833 &
    0.571 & 0.423 & 0.260 & 0.059 \\
    0.061 & 0.070 & 0.047 & 0.059 &
    0.061 & 0.005 & 0.000 & 0.000 &
    0.031 & 0.005 & 0.054 & 0.078 &
    0.324 & 0.332 & 0.389 & 0.358 \\
    0.025 & 0.027 & 0.018 & 0.012 &
    0.006 & 0.010 & 0.006 & 0.000 &
    0.000 & 0.000 & 0.000 & 0.007 &
    0.053 & 0.184 & 0.351 & 0.583
\end{array}
\right]
$}
\]
\[
\resizebox{\columnwidth}{!}{$
A^{(\textrm{sup})} =
\left[
\begin{array}{cccccccccccccccc}
    1.000 & 0.986 & 0.980 & 1.000 & 
    0.866 & 0.847 & 0.742 & 0.536 &
    0.538 & 0.377 & 0.288 & 0.266 &
    0.246 & 0.079 & 0.045 & 0.289 \\
    0.000 & 0.000 & 0.014 & 0.000 &
    0.125 & 0.144 & 0.242 & 0.464 &
    0.432 & 0.623 & 0.694 & 0.670 &
    0.436 & 0.549 & 0.431 & 0.000 \\
    0.000 & 0.008 & 0.000 & 0.000 &
    0.004 & 0.004 & 0.012 & 0.000 &
    0.025 & 0.000 & 0.018 & 0.034 &
    0.273 & 0.365 & 0.387 & 0.313 \\
    0.000 & 0.006 & 0.006 & 0.000 &
    0.005 & 0.005 & 0.004 & 0.000 & 
    0.005 & 0.000 & 0.000 & 0.000 &
    0.045 & 0.007 & 0.137 & 0.398
\end{array}
\right]
$}
\]

Finally, to evaluate the overall performance of the fitted model, we calculated the Wasserstein R-squared  and Order Preservation Index. The resulting values are     $R^2_{W} = 0.4738$ and  $\textrm{OPI}_{\preceq_{W}}=0.8246$.
This result implies that our linear model, based on the satisfaction compositions of night-time safety and air quality, successfully reduces the total ordinal error in predicting the composition of overall satisfaction about living in that city by 47.38\% compared to a baseline model that simply uses the median night-time safety and air quality compositions for all predictions. This indicates a substantial and meaningful goodness-of-fit.

\section{Conclusions and future directions}\label{sec7}
In this paper we introduced a unified, transforma-tion-free regression framework for ordinal compositional data. By shifting the analytical focus from the nominal log-ratio geometry of Aitchison to the metric space of optimal transport, we addressed long-standing methodological challenges in compositional data analysis. Rather than forcing constrained compositional data into an unconstrained Euclidean space---a process that inherently discards ordinal relationships and struggles with boundary zero values---our method adapts  geometry to the native structure of the data. Leveraging the weighted 1-Wasserstein distance and constraining the regression mappings to the space of column-stochastic matrices, the proposed framework seamlessly embeds the sequential progression of the categories directly into the estimation process.
Beyond its theoretical consistency, the practical value of this approach lies in its robust computability and profound interpretability. Formulating the estimation exactly as a Linear Programming (LP) problem guarantees global optimality and computational efficiency, bypassing the convergence issues and local minima often associated with complex maximum likelihood estimations. Furthermore, in an era increasingly dominated by opaque, black-box algorithms, our framework yields an inherently interpretable regression operator: each column of the estimated matrix explicitly represents a conditional probability distribution. As demonstrated in our applications to educational assortative mating and urban satisfaction, this allows domain experts to intuitively read the estimated effects, transitions, and structural zeros.
The introduction of the ordinal tensor product to handle interactions formally generalizes the framework, yet it also highlights stimulating avenues for future research. We conclude by outlining three possible directions:

\begin{enumerate}
    \item \textit{High-dimensional spaces and sparsity.} As the number of covariates increases, the dimension of the ordinal tensor product space grows multiplicatively, potentially leading to a curse of dimensionality. Since our estimation is already cast as an LP, it is naturally suited for the inclusion of sparsity-inducing penalties. Exploring group-LASSO regularization or low-rank tensor approximations within the LP constraints could enable simultaneous parameter estimation and implicit feature selection in high-dimensional settings.
    
    \item \textit{Data-driven Metric Learning.} In the current framework, the weights of the 1-Wasserstein ground metric (representing the cost of transitioning between adjacent categories) are fixed \textit{a priori}. Extending the model to a bi-level optimization problem where these weights are actively learned from the data could reveal latent ordinal structures. In the context of survey data, this would empirically quantify the perceived psychological distances between Likert-scale responses, offering profound sociological insights.
    
    \item \textit{Integration of Continuous Covariates.} The current formulation is intrinsically designed to map compositional (or categorical) inputs to compositional outputs. Accommodating unbounded continuous covariates (e.g., age or income) directly into this LP-based mass-transportation framework requires the development of novel functional mappings that preserve the stochastic constraints on the output space. Solving this would significantly broaden the empirical applicability of the model.
\end{enumerate}
Ultimately, by bridging compositional data analysis with optimal transport geometry, this work not only resolves practical issues related to zero-inflation and ordinality but also establishes a highly flexible mathematical foundation for future developments in constrained regression analysis.

\section*{Appendices}\label{app}

\subsection*{A. \  Metric consistency of the joint optimal vertex}

In this Appendix we prove some properties concerning the consistency of the ordinal tensor product, defined in Section 2 of the paper, with the First-order Stochastic Dominance and the Wasserstein order.   

\begin{lemma}\label{dominance_tensor_product}
Let $P, \tilde{P} \in \Delta_n$ and $Q, \tilde{Q} \in \Delta_h$. If $P \succeq_{FSD} \tilde{P}$ and $Q \succeq_{FSD} \tilde{Q}$, then $P \otimes Q \succeq_{FSD} \tilde{P} \otimes \tilde{Q}$.
\end{lemma}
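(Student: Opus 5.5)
The plan is to express each cumulative component of the ordinal tensor product as the probability of a \emph{down-set} of the grid $\mathcal{I}$ under the product measure. Monotonicity in $P$ and in $Q$ then follows from two one-dimensional FSD comparisons chained together.

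First we fix notation and check that the cumulative sets are down-sets. Write $P=(x_1,\ldots,x_{n+1})$ and $Q=(y_1,\ldots,y_{h+1})$, and similarly $\tilde P$, $\tilde Q$. For $k\in\{1,\ldots,(n+1)(h+1)-1\}$ set $L_k=\{\sigma(1),\ldots,\sigma(k)\}\subseteq\mathcal{I}$. By construction of the ordinal tensor product,
\begin{equation*}
F_k(P\otimes Q)=\sum_{(i,j)\in L_k} x_i y_j .
\end{equation*}
Since $v_k,u_m>0$ and $\delta,\eta>0$, the cost $\mathcal{C}(i,j)=c_i+c'_j+i\delta+j\eta$ is strictly increasing in each coordinate. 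Hence if $(i,j)\in L_k$ and $i'\le i$, $j'\le j$, then $\mathcal{C}(i',j')\le\mathcal{C}(i,j)$. Because $\sigma$ lists states by increasing cost, $(i',j')$ has rank at most that of $(i,j)$, so $(i',j')\in L_k$. Thus each $L_k$ is a down-set for the componentwise order. Consequently there are integers $h+1\ge j_1\ge j_2\ge\cdots\ge j_{n+1}\ge 0$ with $L_k=\{(i,j): j\le j_i\}$, and
\begin{equation*}
F_k(P\otimes Q)=\sum_{i=1}^{n+1} x_i\,F_{j_i}(Q),
\end{equation*}
with the conventions $F_0\equiv 0$ and $F_{h+1}\equiv 1$.

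Next we replace $Q$ by $\tilde Q$. Since $Q\succeq_{FSD}\tilde Q$ means $F_j(Q)\ge F_j(\tilde Q)$ for all $j$, and this is trivially true for $j=0$ and $j=h+1$, the nonnegativity of the $x_i$ gives
\begin{equation*}
\sum_i x_iF_{j_i}(Q)\ \ge\ \sum_i x_i F_{j_i}(\tilde Q).
\end{equation*}

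Then we replace $P$ by $\tilde P$ using Abel summation. Put $g_i=F_{j_i}(\tilde Q)$, which is nonincreasing in $i$ because $j_i$ is nonincreasing and CDFs are nondecreasing. Then
\begin{equation*}
\sum_{i=1}^{n+1} x_i g_i=\sum_{i=1}^{n} F_i(P)\,(g_i-g_{i+1})+g_{n+1}.
\end{equation*}
Each coefficient $g_i-g_{i+1}$ is nonnegative, and $F_i(P)\ge F_i(\tilde P)$, so the sum is at least the same expression with $\tilde P$ in place of $P$, namely $\sum_i\tilde x_i g_i=F_k(\tilde P\otimes\tilde Q)$. Chaining the two inequalities gives $F_k(P\otimes Q)\ge F_k(\tilde P\otimes\tilde Q)$ for every $k$, which is exactly $P\otimes Q\succeq_{FSD}\tilde P\otimes\tilde Q$.

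The main point needing care is the down-set step. It is precisely where the specific cost \eqref{refined-cost} matters: an arbitrary ordering $\sigma$ would not make the $L_k$ down-sets, and the lemma could fail. The remaining steps are routine rearrangements.
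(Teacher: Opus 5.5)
Your proof is correct and follows the same route as the paper. Both show that each cumulative set $\{\sigma(1),\ldots,\sigma(k)\}$ is a down-set, because $\mathcal{C}$ is strictly increasing in each coordinate, and then chain two one-dimensional FSD comparisons. The differences are minor: you handle $Q$ first through the row thresholds $j_i$, which makes that step a pointwise CDF comparison, and you justify the $P$-step by Abel summation; the paper handles $P$ first and invokes the nonincreasing-test-function characterization of FSD twice, swapping the order of summation in between.
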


\begin{proof}
Let $P = (x_1, \dots, x_{n+1})$ and $\tilde{P} = (\tilde{x}_1, \dots, \tilde{x}_{n+1})$ be points in $\Delta_n$. Similarly, let $Q = (y_1, \dots, y_{h+1})$ and $\tilde{Q} = (\tilde{y}_1, \dots, \tilde{y}_{h+1})$ be points in $\Delta_h$. Let $H_k(P \otimes Q) = \sum_{(i,j) \in I_k} x_i y_j$ \ be the joint cumulative distribution function  at index $k$, where $I_k = \{ \sigma(1), \ldots, \sigma(k) \}$ is the set of the $k$ pairs with the lowest joint costs. 
Since the ordinal cost function $\mathcal{C}(i, j) = c_i + c'_j + i\delta + j\eta$ is strictly increasing with respect to both components (as the marginal cumulative costs $c_i$ and $c'_j$ are non-decreasing), if a state $(i, j) \in I_k$, then any state $(r, s)$ with $r \le i$ and $s \le j$ must satisfy $\mathcal{C}(r, s) \le \mathcal{C}(i, j)$. Thus, $(r, s) \in I_k$. This implies that $I_k$ is a \textit{down-set} in the product lattice $\{1, \dots, n+1\} \times \{1, \dots, h+1\}$. We can rewrite the joint sum $H_k(P \otimes Q)$ by grouping terms according to the first index $i$:
\begin{equation*}
H_k(P \otimes Q) = \sum_{i=1}^{n+1} x_i \left( \sum_{j: (i,j) \in I_k} y_j \right) = \sum_{i=1}^{n+1} x_i g(i)
\end{equation*}
where $g(i) = \sum_{j: (i,j) \in I_k} y_j$. Because $I_k$ is a down-set, as $i$ increases, the subset of available indices $j$ for which $(i, j) \in I_k$ shrinks or remains unchanged. Since probabilities are non-negative ($y_j \ge 0$), $g(i)$ is a monotonically non-increasing function of $i$.  By the definition of first-order stochastic dominance, if $P \succeq_{FSD} \tilde{P}$, then for any non-increasing sequence $g(i)$, it holds that $\sum_{i=1}^{n+1} x_i g(i) \ge \sum_{i=1}^{n+1} \tilde{x}_i g(i)$. Applying this property to our expression, we obtain:
\begin{equation*}
H_k(P \otimes Q) \ge \sum_{i=1}^{n+1} \tilde{x}_i g(i) = \sum_{i=1}^{n+1} \tilde{x}_i \left( \sum_{j: (i,j) \in I_k} y_j \right).
\end{equation*}
By swapping the order of summation, we can rewrite the right-hand side grouped by the second index $j$:
\begin{equation*}
\sum_{i=1}^{n+1} \tilde{x}_i \left( \sum_{j: (i,j) \in I_k} y_j \right) = \sum_{j=1}^{h+1} y_j \left( \sum_{i: (i,j) \in I_k} \tilde{x}_i \right) = \sum_{j=1}^{h+1} y_j h(j)
\end{equation*}
where $h(j) = \sum_{i: (i,j) \in I_k} \tilde{x}_i$. Using the exact same down-set property of $I_k$ on the second coordinate, $h(j)$ is a monotonically non-increasing function of $j$. Since $Q \succeq_{FSD} \tilde{Q}$, we apply the first-order stochastic dominance property to the second factor:
\begin{equation*}
\sum_{j=1}^{h+1} y_j h(j) \ge \sum_{j=1}^{h+1} \tilde{y}_j h(j) = \sum_{j=1}^{h+1} \tilde{y}_j \left( \sum_{i: (i,j) \in I_k} \tilde{x}_i \right) = \sum_{(i,j) \in I_k} \tilde{x}_i \tilde{y}_j = H_k(\tilde{P} \otimes \tilde{Q}).
\end{equation*}
Chaining the inequalities together yields $H_k(P \otimes Q) \ge H_k(\tilde{P} \otimes \tilde{Q})$ for all $k$. By definition, this establishes that $P \otimes Q \succeq_{FSD} \tilde{P} \otimes \tilde{Q}$, completing the proof.
\end{proof}

\begin{proposition}\label{isometry-prop}
For any $P \in \Delta_n$ and $Q \in \Delta_h$, as $(\delta, \eta) \to 0$, the weighted 1-Wasserstein distances of $\Delta_{nh+n+h}$, $\Delta_n$, and $\Delta_h$ satisfy:
\begin{equation*}
W_{1, \mathbf{w}}(P \otimes Q, V_1 \otimes U_1) = W_{1, \mathbf{v}}(P, V_1) + W_{1, \mathbf{u}}(Q, U_1),
\end{equation*}
where $V_1=(1,0,\dots,0) \in \Delta_n$ and $U_1 =(1,0,\dots, 0) \in \Delta_h$ are the maximal rank vertices  of the simplices $\Delta_n$ and $\Delta_h$, respectively.
\end{proposition}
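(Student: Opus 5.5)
The plan is to compute the left-hand side with the vertex formula \eqref{delta_1}, applied to the joint simplex $\Delta_{nh+n+h}$ with weights $\mathbf{w}$. First I note that $V_1\otimes U_1$ is the vertex of $\Delta_{nh+n+h}$ with all mass on the joint state $(1,1)$. Since $\mathcal{C}(1,1)=\delta+\eta$ is the smallest of all costs, we have $\sigma(1)=(1,1)$, so $V_1\otimes U_1$ is the first vertex of the joint simplex. Hence, by \eqref{delta_1},
\begin{equation*}
W_{1,\mathbf{w}}(P\otimes Q, V_1\otimes U_1)=\sum_{k=1}^{(n+1)(h+1)} C_k\, z_k ,
\end{equation*}
where $z_k$ is the $k$-th component of $P\otimes Q$ in the ordering $\sigma$, that is $z_k=x_iy_j$ with $(i,j)=\sigma(k)$. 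The cumulative cost is $C_k=\sum_{l=1}^{k-1} w_l$.

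Next I use the definition $w_l=\mathcal{C}(\sigma(l+1))-\mathcal{C}(\sigma(l))$. This makes the sum telescope:
\begin{equation*}
C_k=\mathcal{C}(\sigma(k))-\mathcal{C}(\sigma(1))=c_i+c'_j+(i-1)\delta+(j-1)\eta,
\end{equation*}
again with $(i,j)=\sigma(k)$. Because $\sigma$ is a bijection onto $\mathcal{I}$, I can reindex the sum over $\mathcal{I}$. Together with $\sum_j y_j=\sum_i x_i=1$, this gives
\begin{equation*}
\sum_{i,j}\big(c_i+c'_j\big)x_iy_j+\delta\sum_i (i-1)x_i+\eta\sum_j (j-1)y_j
=\sum_i c_ix_i+\sum_j c'_jy_j+O(\delta+\eta).
\end{equation*}

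Finally, \eqref{delta_1} applied separately in $\Delta_n$ and $\Delta_h$ identifies $\sum_i c_ix_i=W_{1,\mathbf{v}}(P,V_1)$ and $\sum_j c'_jy_j=W_{1,\mathbf{u}}(Q,U_1)$. The remainder is bounded by $n\delta+h\eta$, which tends to $0$ as $(\delta,\eta)\to0$. In fact the identity is exact up to the explicit term $\delta(\mathbb{E}_P[i]-1)+\eta(\mathbb{E}_Q[j]-1)$.

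The only delicate step is justifying the use of \eqref{delta_1} on the joint simplex. This needs two facts: the weights $w_l$ must be positive, which holds by strict injectivity of $\mathcal{C}$, and the vertex $V_1\otimes U_1$ must sit in first position in the $\sigma$-ordering, which holds because $\mathcal{C}$ is minimized at $(1,1)$. Once both are checked, the telescoping of the weights is the key observation, and the rest is bookkeeping.
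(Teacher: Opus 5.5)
Your proposal is correct and follows essentially the same route as the paper: you write the distance to the first vertex as an expected cumulative cost via \eqref{delta_1}, let the refined cost tend to $c_i+c'_j$, and factor using $\sum_i x_i=\sum_j y_j=1$. Your telescoping $C_k=\mathcal{C}(\sigma(k))-\mathcal{C}(\sigma(1))$ is in fact more precise than the paper's display, which uses $\mathcal{C}(\sigma(k))$ itself and silently drops the vanishing shift $\delta+\eta$; your exact remainder $\delta(\mathbb{E}_P[i]-1)+\eta(\mathbb{E}_Q[j]-1)$ is the same expansion the paper uses later in its Corollary.
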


\begin{proof}
Let $P = (x_1, \dots, x_{n+1}) \in \Delta_n$ and $Q = (y_1, \dots, y_{h+1}) \in \Delta_h$. By equation 3 in the paper, the marginal distances from the optimal vertices can be expressed as the expected costs $W_{1, \mathbf{v}}(P, V_1) = \sum_{i=1}^{n+1} c_i x_i$ and $W_{1, \mathbf{u}}(Q, U_1) = \sum_{j=1}^{h+1} c'_j y_j$. In the joint simplex $\Delta_{nh+n+h}$, the distance of $P \otimes Q$ from the optimal joint vertex $V_1 \otimes U_1$ is given by:
\begin{equation*}
W_{1, \mathbf{w}}\left(P \otimes Q, V_1 \otimes U_1\right) = \sum_{k=1}^{(n+1)(h+1)} \mathcal{C}(\sigma(k)) z_{\sigma(k)}
\end{equation*}
where the components $z_{\sigma(k)}$ correspond exactly to the tensor product terms $x_i y_j$. As $(\delta, \eta) \to 0$, the refined cost $\mathcal{C}(\sigma(k))$ converges to the purely additive cost $c_i + c'_j$. Thus:
\begin{align*}
W_{1, \mathbf{w}}\left(P \otimes Q, V_1 \otimes U_1\right) &= \sum_{i=1}^{n+1} \sum_{j=1}^{h+1} (c_i + c'_j) x_i y_j = \sum_{i=1}^{n+1} c_i x_i \left( \sum_{j=1}^{h+1} y_j \right) + \sum_{j=1}^{h+1} c'_j y_j \left( \sum_{i=1}^{n+1} x_i \right).
\end{align*}
Since $\sum_{j=1}^{h+1} y_j = 1$ and $\sum_{i=1}^{n+1} x_i = 1$, we obtain:
\begin{equation*}
W_{1, \mathbf{w}}(P \otimes Q, V_1 \otimes U_1) = \sum_{i=1}^{n+1} c_i x_i + \sum_{j=1}^{h+1} c'_j y_j = W_{1, \mathbf{v}}(P, V_1) + W_{1, \mathbf{u}}(Q, U_1).
\end{equation*}
\end{proof}

Proposition \ref{isometry-prop} implies that the ordinal tensor product preserves the Wasserstein order on the primary lexicographical component in the limit. The following Corollary rigorously formalizes this property using the exact distance expansion.

\begin{corollary}
Let $P, \tilde{P} \in \Delta_n$ and $Q, \tilde{Q} \in \Delta_h$. Assuming strict preference on the primary distance to the optimal vertex, if $P \succ_{W} \tilde{P}$ and $Q \succeq_{W} \tilde{Q}$, then as $(\delta, \eta) \to 0$, we have $P \otimes Q \succ_{W} \tilde{P} \otimes \tilde{Q}$.
\end{corollary}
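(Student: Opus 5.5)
The plan is to reduce everything to the first lexicographic component of the Wasserstein vector and then apply Proposition~\ref{isometry-prop}. The hypothesis ``strict preference on the primary distance'' means that $P \succ_W \tilde{P}$ is witnessed at the first coordinate: $\delta_1(P) < \delta_1(\tilde{P})$, i.e. $W_{1,\mathbf{v}}(P,V_1) < W_{1,\mathbf{v}}(\tilde P,V_1)$. From $Q \succeq_W \tilde{Q}$ I only need the consequence on the first coordinate, namely $\delta_1(Q) \le \delta_1(\tilde Q)$, i.e. $W_{1,\mathbf{u}}(Q,U_1) \le W_{1,\mathbf{u}}(\tilde Q,U_1)$. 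This holds because lexicographic order $\mathbf{W}(Q)\le_{lex}\mathbf{W}(\tilde Q)$ forces the first entries to satisfy $\le$.

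Next I use the distance expansion. For fixed small $(\delta,\eta)$, by \eqref{delta_1} applied on the joint simplex, $\delta_1(P\otimes Q)=\sum_{(i,j)}\mathcal{C}(i,j)x_iy_j - \mathcal{C}(1,1)$, since the cumulative costs there are $\mathcal{C}(\sigma(k))-\mathcal{C}(\sigma(1))$. Substituting \eqref{refined-cost} and using the marginal sums as in the proof of Proposition~\ref{isometry-prop} gives the exact identity
\begin{equation*}
\delta_1(P\otimes Q)=W_{1,\mathbf{v}}(P,V_1)+W_{1,\mathbf{u}}(Q,U_1)+\delta\,(\mathbb{E}_P[i]-1)+\eta\,(\mathbb{E}_Q[j]-1).
\end{equation*}
The same identity holds for $\tilde P\otimes\tilde Q$. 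Subtracting, I get $\delta_1(\tilde P\otimes\tilde Q)-\delta_1(P\otimes Q)=\Gamma+\delta\,\alpha+\eta\,\beta$, where
\begin{equation*}
\Gamma=\bigl[W_{1,\mathbf{v}}(\tilde P,V_1)-W_{1,\mathbf{v}}(P,V_1)\bigr]+\bigl[W_{1,\mathbf{u}}(\tilde Q,U_1)-W_{1,\mathbf{u}}(Q,U_1)\bigr]>0,
\end{equation*}
and $|\alpha|\le n$, $|\beta|\le h$ are bounded independently of $\delta,\eta$.

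The main obstacle is this perturbation step. The weights $\mathbf{w}$, and indeed the permutation $\sigma$, depend on $(\delta,\eta)$, so one might worry that the ordering changes in the limit. The point is that the expansion above does not depend on $\sigma$ at all: $\delta_1$ is a $\sigma$-invariant sum of $\mathcal{C}$ times masses. Hence $\Gamma>0$ is a fixed positive gap, and for all $(\delta,\eta)$ with $n\delta+h\eta<\Gamma$ the difference is strictly positive. Thus $\delta_1(P\otimes Q)<\delta_1(\tilde P\otimes\tilde Q)$, so $\mathbf{W}(P\otimes Q)<_{lex}\mathbf{W}(\tilde P\otimes\tilde Q)$ already at the first coordinate. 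This gives $P\otimes Q\succ_W\tilde P\otimes\tilde Q$, where strictness follows from the strict first-coordinate inequality together with Lemma~\ref{segre-injective}.
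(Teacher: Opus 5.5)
Your proposal is correct and follows essentially the same route as the paper: you reduce to the first lexicographic coordinate, use the exact expansion $\delta_1(P\otimes Q)=\delta_1(P)+\delta_1(Q)+\delta(\mathbb{E}[I_P]-1)+\eta(\mathbb{E}[I_Q]-1)$, and observe that the fixed positive gap dominates the bounded $(\delta,\eta)$-perturbation for small $\delta,\eta$. Your explicit threshold $n\delta+h\eta<\Gamma$ and the remark that this expansion does not depend on $\sigma$ are small sharpenings of the paper's limit argument. The appeal to Lemma~\ref{segre-injective} for strictness is unnecessary, since the strict first-coordinate inequality already gives a strict lexicographic inequality.
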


\begin{proof}
According to Definition 1, the primary component of the Wasserstein order $\succeq_W$ is the distance from the optimal vertex. In the joint simplex, the exact distance from the optimal joint vertex $V_1 \otimes U_1$ (which corresponds to the state $(1,1)$ with cost $\mathcal{C}(1,1) = \delta + \eta$) is given by the expected cost shift:
\begin{align*}
\delta_1(P \otimes Q) &= \sum_{i=1}^{n+1} \sum_{j=1}^{h+1} \left( \mathcal{C}(i,j) - \mathcal{C}(1,1) \right) x_i y_j \\
&= \sum_{i=1}^{n+1} \sum_{j=1}^{h+1} \left( c_i + c'_j + (i-1)\delta + (j-1)\eta \right) x_i y_j \\
&= \delta_1(P) + \delta_1(Q) + \delta(\mathbb{E}[I_P] - 1) + \eta(\mathbb{E}[I_Q] - 1)
\end{align*}
where $\mathbb{E}[I_P] = \sum_{i=1}^{n+1} i x_i$ and $\mathbb{E}[I_Q] = \sum_{j=1}^{h+1} j y_j$ denote the expected categorical indices.
The assumptions $P \succ_W \tilde{P}$ and $Q \succeq_W \tilde{Q}$ on the primary distance imply $\delta_1(P) < \delta_1(\tilde{P})$ and $\delta_1(Q) \le \delta_1(\tilde{Q})$. Let $\Delta = (\delta_1(\tilde{P}) + \delta_1(\tilde{Q})) - (\delta_1(P) + \delta_1(Q)) > 0$. 
The exact difference in the joint primary distances expands as:
\begin{equation*}
\delta_1(\tilde{P} \otimes \tilde{Q}) - \delta_1(P \otimes Q) = \Delta + \delta \left( \mathbb{E}[I_{\tilde{P}}] - \mathbb{E}[I_P] \right) + \eta \left( \mathbb{E}[I_{\tilde{Q}}] - \mathbb{E}[I_Q] \right).
\end{equation*}
Since $\Delta > 0$ is independent of $\delta$ and $\eta$, the limit as $(\delta, \eta) \to 0$ of the difference converges exactly to $\Delta > 0$. Thus, for sufficiently small $\delta$ and $\eta$, the perturbation terms are strictly bounded by $\Delta$, and we have $\delta_1(P \otimes Q) < \delta_1(\tilde{P} \otimes \tilde{Q})$. 
Since the primary distance component for $P \otimes Q$ is strictly smaller than that of $\tilde{P} \otimes \tilde{Q}$, it follows immediately from the lexicographical definition that $P \otimes Q \succ_{W} \tilde{P} \otimes \tilde{Q}$.
\end{proof}

The following theorem shows that, under the assumption of First-order Stochastic Dominance, the 1-Wasserstein distance between any two joint compositions decomposes exactly into the sum of the marginal distances.

\begin{theorem}
For any $P, \tilde{P} \in\Delta_{n}$ and $Q, \tilde{Q} \in\Delta_h$ such that $P \succeq_{FSD} \tilde{P}$ and $Q \succeq_{FSD} \tilde{Q}$, as $(\delta, \eta) \to 0$ we have:
\begin{equation*}
W_{1, \mathbf{w}}(P \otimes Q, \tilde{P} \otimes \tilde{Q}) = W_{1, \mathbf{v}}(P, \tilde{P}) + W_{1, \mathbf{u}}(Q, \tilde{Q}).
\end{equation*}
\end{theorem}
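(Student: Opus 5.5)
The plan is to reduce the statement to a telescoping identity for distances between stochastically ordered points, and then combine it with Lemma \ref{dominance_tensor_product} and the additivity computation already used in Proposition \ref{isometry-prop}.

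\textbf{Step 1: a signed formula for ordered pairs.} First I would prove, for an arbitrary weighted simplex $\Delta_N$ with weights $w_k$ and cumulative costs $C_1=0$, $C_{k+1}=C_k+w_k$, the following claim. If $X \succeq_{FSD} Y$, then
\begin{equation*}
W_{1,\mathbf{w}}(X,Y)=\sum_k C_k\,(Y_k-X_k).
\end{equation*}
Since $F_k(X)\ge F_k(Y)$ for every $k$, the absolute values in \eqref{wass} can be dropped. This gives $W_{1,\mathbf{w}}(X,Y)=\sum_k w_k(F_k(X)-F_k(Y))$. A summation by parts, identical to the one giving \eqref{delta_1}, rewrites $\sum_k w_k F_k(X)$ as $C_{N+1}-\sum_k C_k X_k$, and the claim follows.

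\textbf{Step 2: apply the formula in all three simplices.} In the joint simplex $\Delta_{nh+n+h}$ the cumulative cost at rank $k$ is $\mathcal{C}(\sigma(k))-\mathcal{C}(1,1)$. The additive constant $\mathcal{C}(1,1)$ cancels because both points are compositions.
\begin{itemize}
\item By Lemma \ref{dominance_tensor_product}, $P\otimes Q \succeq_{FSD} \tilde P\otimes\tilde Q$, so Step 1 applies in the joint simplex and gives
\begin{equation*}
W_{1,\mathbf{w}}(P\otimes Q,\tilde P\otimes\tilde Q)=\sum_{i,j}\mathcal{C}(i,j)\,(\tilde x_i\tilde y_j-x_iy_j).
\end{equation*}
The reindexing from $\sigma(k)$ back to $(i,j)$ is harmless because $\sigma$ is a bijection.
\item The hypotheses $P\succeq_{FSD}\tilde P$ and $Q\succeq_{FSD}\tilde Q$ give, in the same way,
\begin{equation*}
W_{1,\mathbf{v}}(P,\tilde P)=\sum_i c_i(\tilde x_i-x_i), \qquad W_{1,\mathbf{u}}(Q,\tilde Q)=\sum_j c'_j(\tilde y_j-y_j).
\end{equation*}
\end{itemize}

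\textbf{Step 3: pass to the limit.} Substitute $\mathcal{C}(i,j)=c_i+c'_j+i\delta+j\eta$ into the joint expression. Then split the sum exactly as in the proof of Proposition \ref{isometry-prop}, using $\sum_j y_j=\sum_j\tilde y_j=1$ and $\sum_i x_i=\sum_i\tilde x_i=1$. The joint distance becomes
\begin{equation*}
W_{1,\mathbf{v}}(P,\tilde P)+W_{1,\mathbf{u}}(Q,\tilde Q)+\delta\bigl(\mathbb{E}[I_{\tilde P}]-\mathbb{E}[I_P]\bigr)+\eta\bigl(\mathbb{E}[I_{\tilde Q}]-\mathbb{E}[I_Q]\bigr).
\end{equation*}
The last two terms are bounded by $n\delta$ and $h\eta$, so they vanish as $(\delta,\eta)\to0$. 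This yields the claimed identity.

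\textbf{Main obstacle: dependence of $\sigma$ on $(\delta,\eta)$.} The delicate point is that $\sigma$, and hence the ordering of the joint coordinates and the down-sets $I_k$, may depend on $(\delta,\eta)$. I would handle this with two observations.
\begin{itemize}
\item Along the limit one keeps a fixed tie-breaking regime, for instance $0<\eta\ll\delta$ as in the Example. Because $\mathcal{I}$ is finite, $\sigma$ is then constant for all sufficiently small parameters, so the FSD conclusion of Lemma \ref{dominance_tensor_product} holds uniformly along the limit.
\item More importantly, the final expression in Step 2 is a sum over pairs $(i,j)$. It no longer refers to $\sigma$ at all, so even a change of $\sigma$ would not affect the limit.
\end{itemize}
It also remains to confirm that Lemma \ref{dominance_tensor_product} genuinely gives $F_k$-dominance for every $k$ in the joint ordering, so that dropping the absolute values in Step 1 is justified. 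This is exactly its down-set argument, and I take that lemma as given.
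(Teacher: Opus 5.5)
Your proposal is correct and follows essentially the same route as the paper. Both arguments get joint dominance from Lemma~\ref{dominance_tensor_product}, drop the absolute values so that each distance becomes a difference of expected cumulative costs (equivalently, of distances to the best vertex), and then apply the additive splitting of $\mathcal{C}(i,j)$ from Proposition~\ref{isometry-prop}. Your version is slightly more careful than the paper's: you keep the exact $O(\delta+\eta)$ remainder, and you observe that the final sum over pairs $(i,j)$ does not depend on $\sigma$, a point the paper does not address.
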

\begin{proof}
By Lemma \ref{dominance_tensor_product}, $P \otimes Q \succeq_{FSD} \tilde{P} \otimes \tilde{Q}$. Moreover, the assumption of stochastic dominance implies $F_i(P) \ge F_i(\tilde{P})$, allowing us to compute the 1-Wasserstein distance directly without absolute values:
\begin{equation}\label{formula_dominance1}
 W_{1, \mathbf{v}}(P, \tilde{P}) = \sum_{i=1}^{n} v_i \left( F_i(P) - F_i(\tilde{P}) \right) =W_{1, \mathbf{v}}(\tilde{P}, V_1) - W_{1, \mathbf{v}}(P, V_1)    
\end{equation}
and, analogously,
\begin{gather}
W_{1, \mathbf{u}}(Q, \tilde{Q}) = W_{1, \mathbf{u}}(\tilde{Q}, U_1) - W_{1, \mathbf{u}}(Q, U_1), \label{formula_dominance2} \\
W_{1, \mathbf{w}}(P\otimes Q, \tilde{P} \otimes \tilde{Q}) = W_{1, \mathbf{w}}(\tilde{P} \otimes \tilde{Q}, V_1 \otimes U_1) - W_{1, \mathbf{w}}(P\otimes Q, V_1 \otimes U_1). \label{formula_dominance3}
\end{gather}
Then, using Proposition \ref{isometry-prop} alongside equations \eqref{formula_dominance1}, \eqref{formula_dominance2}, and \eqref{formula_dominance3}, we have:
\begin{align*}
W_{1, \mathbf{w}}(P\otimes Q, \tilde{P}\otimes \tilde{Q}) &= W_{1, \mathbf{w}}(\tilde{P}\otimes \tilde{Q}, V_1\otimes U_1) - W_{1, \mathbf{w}}(P\otimes Q, V_1\otimes U_1) \\
&= W_{1, \mathbf{v}}(\tilde{P}, V_1) + W_{1, \mathbf{u}}(\tilde{Q}, U_1) - \left( W_{1, \mathbf{v}}(P, V_1) + W_{1, \mathbf{u}}(Q, U_1) \right) \\
&= W_{1, \mathbf{v}}(P, \tilde{P}) + W_{1, \mathbf{u}}(Q, \tilde{Q}).
\end{align*}
\end{proof}


\subsection*{B. \ Selection of the regularization parameters $\lambda_1$ and $\lambda_2$}

Two complementary approaches are considered for the selection of the regularization parameters $\lambda_1$ and $\lambda_2$: a Generalized Cross Validation (GCV) criterion based on the Singular Value Decomposition (SVD) of the estimated operator, and a standard Cross-Validation (CV) procedure. The two methods exhibit different behaviors depending on the sample size and the stability of the estimation problem.
The SVD-based approach is computationally efficient and particularly suitable in small-sample settings, where repeated data splitting may introduce high variability and unstable estimates. By exploiting the spectral structure of the estimated matrix, the GCV criterion provides a smooth approximation of model complexity without requiring repeated refitting on different folds. However, since it relies on an approximation of the effective degrees of freedom, it may become less reliable in larger samples where direct predictive validation is feasible.
Conversely, the Cross-Validation approach directly evaluates predictive performance on held-out observations and is therefore generally more robust when a sufficiently large number of observations is available. Nevertheless, in small datasets, repeated splitting may lead to considerable variability in the selected parameters and increased computational cost, especially in the presence of constrained LP optimization problems.
For these reasons, the SVD-based criterion is mainly adopted in low-sample and potentially unstable settings, while Cross-Validation is preferred when the available sample size is sufficiently large to provide stable out-of-sample estimates.
For each pair $(\lambda_1,\lambda_2)$ in a predefined grid, the LP problem is solved to obtain the optimal matrix $A_{\lambda_1,\lambda_2}$ satisfying the simplex and cumulative fitting constraints. Let:
\begin{itemize}
    \item[-] $P_i$ be the observed distribution for observation $i$,
    \item[-] $P'_i$ be the target distribution for observation $i$,
    \item[-] $A \in \mathbb{R}^{(m+1) \times (n+1)}$ be the matrix to optimize, with column-stochasticity constraints  $\sum_{r=1}^{m+1} A_{rj}=1$, $A_{rj}\geq0$.
\end{itemize}
The fitted distribution associated with observation $i$ is given by:
$\widehat{P}^{\prime}_i = A P_i$.
The fitting error is then evaluated through the Residual Wasserstein Dispersion
\begin{equation*}
\mathcal{D}_{\mathrm{res}}
=
\sum_{i=1}^{N}
W_{1, \mathbf{a}}
\left(
A P_i,
P'_i
\right)=  \sum_{i=1}^{N}
\sum_{k=1}^{m}
a_k
\left|
F_k(A P_i)
-
F_k(P'_i)
\right|,  
\end{equation*}
where
$F_k(A P_i)
=
\sum_{j=1}^{n+1}
(P_i)_j
\sum_{r=1}^{k} A_{rj} = \sum_{j=1}^{n+1} (P_i)_j F_{k,j}$.
For the SVD-based criterion, the Singular Value Decomposition of the estimated matrix is computed:
\[
A_{\lambda_1,\lambda_2}
=
U \Sigma V^\top,
\]
where:
\begin{itemize}
    \item[-] $U \in \mathbb{R}^{(m+1)\times (m+1)}$ and $V \in \mathbb{R}^{(n+1)\times (n+1)}$ are orthogonal matrices,
    \item[-] $\Sigma = \mathrm{diag}(\sigma_1,\dots,\sigma_r)$ contains the singular values.
\end{itemize}
The SVD provides a spectral characterization of the effective complexity of the estimated
operator (see \cite{caterina} for related uses of SVD in regularization problems). Large
singular values correspond to dominant structural components, whereas small singular values correspond to directions strongly penalized by the regularization. The effective degrees of freedom are approximated by:
\[
\mathrm{df}_{\lambda_1,\lambda_2}
=
\sum_{i=1}^{r}
\frac{\sigma_i^2}
{\sigma_i^2+\lambda_1+\lambda_2}.
\]
The corresponding GCV criterion is:
\[
\mathrm{GCV}(\lambda_1,\lambda_2)
=
\frac{\mathcal{D}_{\mathrm{res}}}
{\left(N_{\mathrm{cumul}}-\mathrm{df}_{\lambda_1,\lambda_2}\right)^2},
\]
where $N_{\mathrm{cumul}} = N \cdot m$ is the total number of cumulative observations. The selected parameters are therefore:
\[
(\lambda_1^*,\lambda_2^*)
=
\arg\min_{(\lambda_1,\lambda_2)}
\mathrm{GCV}(\lambda_1,\lambda_2).
\]
Alternatively, in the Cross-Validation approach, the observations are partitioned into training and validation folds. For each candidate pair $(\lambda_1,\lambda_2)$, the LP model is estimated on the training folds and the predictive fitting error is evaluated on the validation fold using the same cumulative discrepancy measure. The selected parameters are those minimizing the average out-of-sample validation error across folds. In the numerical experiments, the search is performed over a grid of values in the interval $[0,0.1]$. This relatively small range is sufficient to stabilize the estimated conditional distributions while avoiding excessive oversmoothing, which may otherwise suppress relevant structural variability in the transport map.

\subsection*{C. \ Algorithm Setting I}
Algorithm 1 describes in details the procedure used in simulation Setting 1.
\begin{algorithm}
\caption{Monte Carlo Simulation Comparing WD and COD}\label{alg:simulation}
\begin{algorithmic}[1]
\Require Sample size $N$; number of replications $T$; 
number of categories $n+1, m+1$; 
Dirichlet concentration parameter $\alpha$; 
 $\nu$.

\For{$t = 1$ to $T$}

    \State \textbf{(1) Generate initial sample}
    
    \For{$i = 1,\dots,N$}
        \State Sample $P_i \sim \text{Dirichlet}(1_{n+1})$;
        \State Compute latent score 
        $s_i = \sum_{k=1}^{n+1} k P_{ik}$;
        \State Construct target probabilities $p_i$ according to:
         $p_{ij} \propto \exp(-\nu |j - s_i|)$;
        \State Normalize $p_i$ so that $\sum_j p_{ij}=1$;
        \State Sample $P'_i \sim \text{Dirichlet}(\alpha p_i)$;
    \EndFor
    
    \State \textbf{(2) Estimate structural operators}
    
    \State Estimate $A^{WD}$ via optimal transport;
    \State Estimate $A^{COD}$ via COD algorithm;
    
    \State \textbf{(3) Generate new evaluation sample}
    
    \For{$i = 1,\dots,N$}
        \State Sample $P_i^{new} \sim \text{Dirichlet}(1_{n+1})$;
        \State Compute $P_i^{'WD,*} = A^{WD} P_i^{new}$;
        \State Compute $P_i^{'COD,*} = A^{COD} P_i^{new}$;
        \State Sample
        \[
        P_i^{'WD} \sim \text{Dirichlet}(\alpha P_i^{'WD,*}),
        \quad
        P_i^{'COD} \sim \text{Dirichlet}(\alpha P_i^{'COD,*});
        \]
    \EndFor
    
    \State \textbf{(4) Train/validation split}
    
    \State Randomly assign $70\%$ of observations to the training set;
    
    \State Re-estimate WD and COD on training data;
    
    \State Compute validation errors:
    \begin{itemize}
        \item RPS;
        \item BS.
    \end{itemize}
    
    \State Record mean errors and win indicators.
    
\EndFor

\State Compute average validation errors and win rates across replications.

\end{algorithmic}
\end{algorithm}

For each Monte Carlo replication, we compute binary win indicators
based on pairwise comparisons of validation errors. Let
$\mathcal{E}_{WD}^{(d)}$ and 
$\mathcal{E}_{COD}^{(d)}$
denote the validation errors of the WD and COD estimators,
respectively, under data generating process $d$. In the implementation, three groups of comparisons are recorded.
\[ \textit{(i) Wins under WD-DGP:} \quad
1\!\!1 \big(
\mathcal{E}_{WD}^{(WD)} < \mathcal{E}_{COD}^{(WD)}
\big).
\]
This comparison evaluates both estimators on data generated
from the WD-based DGP. It measures how well each method
recovers the structural operator when the true data generating
mechanism is aligned with the geometry exploited by WD.
If the DGP embeds a metric ordinal structure consistent with
Wasserstein geometry (e.g., ordinality of the support and
distance-based probability decay), WD should win with high probability.
In particular, we expect stronger improvements under the Laplace mechanism and with a larger number of ordinal categories because the exponential decay in absolute distance aligns directly with the linear ground metric of Wasserstein geometry, and a finer ordinal support reduces discretization effects, making the underlying transport structure more identifiable.
\[ \textit{(ii) Wins under COD-DGP:} \quad 
1\!\!1\big(
\mathcal{E}_{WD}^{(COD)} < \mathcal{E}_{COD}^{(COD)}
\big).
\]
Here both estimators are evaluated on data generated
from the COD-based DGP. This comparison measures robustness
when the true structural operator coincides with the COD mechanism.
If the COD DGP does not strongly rely on metric displacement,
COD should dominate. However, if the COD-induced structure
still embeds ordinal regularity, WD may remain competitive.
A win rate for WD substantially above $50\%$ would indicate
robustness of the WD estimator to structural misspecification.
\[\textit{(iii) Wins under “True DGP”:} \quad
1\!\!1\big(
\mathcal{E}_{WD}^{(WD)} < \mathcal{E}_{COD}^{(COD)}
\big).
\]
This is not a symmetric estimator comparison under a fixed DGP.
Instead, it contrasts the performance of each estimator under
its own favorable data generating environment.
In other words, it asks whether: Does WD under its natural DGP outperform COD under its natural DGP?
This comparison measures the relative intrinsic efficiency
of the two estimation paradigms when each is evaluated under
its structurally coherent environment.
However, this comparison is inherently asymmetric and
should not be interpreted as a causal dominance statement,
since the underlying DGP differs across the two sides
of the inequality.

\subsection*{D. \ Simulation results}
Here are additional simulation results for $N=100$ and $N=500$ (see Tables \ref{ta1},\ref{ta2},\ref{ta3},\ref{ta4},\ref{ta5},\ref{ta6}). The parameters used in all cases are $\nu=2$, $\alpha=80$, $w_1=0.6$ and $w_2=0.4$.

\begin{table}[H]
\centering
\caption{Setting 1: win rates of OT method under different DGP ($N=500$).}\label{ta1}
\begin{tabular}{ccccccccc}
\toprule
 &  \multicolumn{2}{c}{WD-DGP} 
 & \multicolumn{2}{c}{COD-DGP}
 & \multicolumn{2}{c}{TRUE-DGP} \\
\cmidrule(lr){2-3}\cmidrule(lr){4-5}\cmidrule(lr){6-7}
$(n+1,m+1)$ & RPS & BS & RPS & BS & RPS & BS \\
\midrule

(3,3) 
       & 0.540 & 0.536 & 0.418 & 0.415 & 0.501 & 0.512 \\

\midrule

(7,7) 
      & 0.774 & 0.835 & 0.655 & 0.718 & 0.759 & 0.746 \\

\midrule

(5,5) 
       & 0.602 & 0.611 & 0.474 & 0.471 & 0.585 & 0.581 \\

\midrule

(3,5) 
        & 0.571 & 0.584 & 0.504 & 0.511 & 0.602 & 0.573 \\

\midrule

(5,3) 
       & 0.586 & 0.599 & 0.369 & 0.400 & 0.598 & 0.586 \\

\midrule

(5,7) 
       & 0.621 & 0.640 & 0.420 & 0.434 & 0.632 & 0.626 \\

\midrule

(7,5) 
      & 0.761 & 0.814 & 0.580 & 0.597 & 0.742 & 0.721 \\

\bottomrule
\end{tabular}
\end{table}

\begin{table}[!ht]
\centering
\caption{Win rates of WD method under different DGP ($N=100$). Zeros are randomly selected in the validation set given a proportion ($p$) of the total number of components.}\label{ta2}
\begin{tabular}{cccccccccc}
\toprule
 & & \multicolumn{2}{c}{WD DGP} 
 & \multicolumn{2}{c}{COD DGP}
 & \multicolumn{2}{c}{TRUE DGP} \\
\cmidrule(lr){3-4}\cmidrule(lr){5-6}\cmidrule(lr){7-8}
$(n+1,m+1)$ & $p$ & RPS & Brier & RPS & Brier & RPS & Brier \\
\midrule

(3,3) & 0 & 0.530 & 0.522 & 0.430 & 0.428 & 0.504 & 0.506 \\  
& 0.05 & 0.526 & 0.480 & 0.450 & 0.428 & 0.494 & 0.478 \\  
& 0.10 & 0.532 & 0.494 & 0.492 & 0.452 & 0.562 & 0.548 \\
& 0.15 &  0.528 & 0.490 & 0.534 & 0.503 & 0.579 & 0.560 \\
& 0.20 &  0.530 & 0.519 & 0.539 & 0.510 & 0.589 & 0.570 \\
\midrule

(7,7) & 0 & 0.742 & 0.772 & 0.654 & 0.714 & 0.744 & 0.726 \\  
& 0.05 & 0.754 & 0.816 & 0.628 & 0.668 & 0.764 & 0.766 \\  
& 0.10 & 0.746 & 0.782 & 0.678 & 0.694 & 0.814 & 0.732 \\
& 0.15 & 0.758 & 0.794 & 0.724 & 0.742 & 0.798 & 0.768 \\
& 0.20 & 0.777 & 0.814 & 0.724 & 0.752 & 0.804 & 0.784 \\
\midrule

(5,5) & 0 & 0.579 & 0.583 & 0.480 & 0.480 & 0.574 & 0.572  \\ 
& 0.05 & 0.584 & 0.622 & 0.506 & 0.538 & 0.602 & 0.592 \\  
& 0.10 & 0.618 & 0.638 & 0.542 & 0.554 & 0.664 & 0.662 \\
& 0.15 & 0.628 & 0.664 & 0.550 & 0.568 & 0.626 & 0.612 \\
& 0.20 & 0.676 & 0.706 & 0.578 & 0.600 & 0.672 & 0.656 \\
\midrule

(3,5) & 0 & 0.538 & 0.546 & 0.526 & 0.524 & 0.598 & 0.568 \\
& 0.05 & 0.530 & 0.541 & 0.478 & 0.514 & 0.588 & 0.570 \\  
& 0.10 & 0.531 & 0.541 & 0.476 & 0.512 & 0.585 & 0.578 \\
& 0.15 & 0.538 & 0.547 & 0.496 & 0.514 & 0.587 & 0.581 \\
& 0.20 & 0.665 & 0.690 & 0.568 & 0.582 & 0.622 & 0.609 \\
\midrule

(5,3) & 0 & 0.558 & 0.576 & 0.382 & 0.412 & 0.592 & 0.580  \\ 
& 0.05 & 0.598 & 0.610 & 0.564 & 0.540 & 0.650 & 0.624 \\  
& 0.10 & 0.614 & 0.640 & 0.622 & 0.600 & 0.676 & 0.638 \\
& 0.15 & 0.672 & 0.678 & 0.688 & 0.652 & 0.702 & 0.688 \\
& 0.20 & 0.684 & 0.690 & 0.696 & 0.676 & 0.704 & 0.690 \\
\midrule

(5,7) & 0 & 0.596 & 0.614 & 0.446 & 0.462 & 0.626 & 0.612\\
& 0.05 & 0.590 & 0.608 & 0.478 & 0.514 & 0.588 & 0.610 \\  
& 0.10 & 0.616 & 0.668 & 0.512 & 0.544 & 0.538 & 0.622 \\
& 0.15 & 0.668 & 0.682 & 0.501 & 0.544 & 0.688 & 0.638 \\
& 0.20 & 0.665 & 0.690 & 0.568 & 0.582 & 0.676 & 0.674 \\
\midrule

(7,5) & 0 & 0.730 & 0.782 & 0.616 & 0.626 & 0.734 & 0.718 \\
& 0.05  & 0.744 & 0.754 & 0.646 & 0.648 & 0.742 & 0.696  \\  
& 0.10  & 0.742 & 0.772 & 0.676 & 0.716 & 0.716 & 0.752 \\  
& 0.15 & 0.786 & 0.808 & 0.708 & 0.742 & 0.774 & 0.744 \\
& 0.20 & 0.792 & 0.828 & 0.746 & 0.772 & 0.802 & 0.780 \\
\bottomrule
\end{tabular}
\end{table}

\begin{table}[!ht]
\centering
\caption{Win rates of WD method under different DGP ($N=100$). Zeros are selected in the largest components in the validation set given a proportion ($p$) of the total number of components.}\label{ta3}
\begin{tabular}{cccccccccc}
\toprule
 & & \multicolumn{2}{c}{WD DGP} 
 & \multicolumn{2}{c}{COD DGP}
 & \multicolumn{2}{c}{TRUE DGP} \\
\cmidrule(lr){3-4}\cmidrule(lr){5-6}\cmidrule(lr){7-8}
$(n+1,m+1)$ & $p$ & RPS & Brier & RPS & Brier & RPS & Brier \\
\midrule

(3,3) & 0 & 0.530 & 0.522 & 0.430 & 0.428 & 0.504 & 0.506 \\  
& 0.05 & 0.528 & 0.515 & 0.482 & 0.448 & 0.532 & 0.518 \\  
& 0.10 & 0.540 & 0.550 & 0.536 & 0.542 & 0.580 & 0.596 \\
& 0.15 &  0.616 & 0.622 & 0.590 & 0.602 & 0.612 & 0.626 \\
& 0.20 &  0.650 & 0.640 & 0.622 & 0.678 & 0.654 & 0.668 \\
\midrule

(7,7) & 0 & 0.742 & 0.772 & 0.654 & 0.714 & 0.744 & 0.726 \\  
& 0.05 & 0.816 & 0.842 & 0.690 & 0.726 & 0.806 & 0.770 \\  
& 0.10 & 0.870 & 0.866 & 0.818 & 0.814 & 0.838 & 0.802 \\
& 0.15 & 0.930 & 0.936 & 0.848 & 0.862 & 0.906 & 0.856 \\
& 0.20 & 0.944 & 0.942 & 0.890 & 0.886 & 0.932 & 0.892 \\
\midrule

(5,5) & 0 & 0.579 & 0.583 & 0.480 & 0.480 & 0.574 & 0.572  \\ 
& 0.05 & 0.672 & 0.704 & 0.548 & 0.550 & 0.652 & 0.632 \\  
& 0.10 & 0.724 & 0.754 & 0.624 & 0.662 & 0.714 & 0.706 \\
& 0.15 & 0.770 & 0.806 & 0.712 & 0.710 & 0.760 & 0.714 \\
& 0.20 & 0.858 & 0.900 & 0.786 & 0.846 & 0.840 & 0.832 \\

\midrule

(3,5) & 0 & 0.538 & 0.546 & 0.526 & 0.524 & 0.598 & 0.568 \\
& 0.05 & 0.548 & 0.510 & 0.506 & 0.518 & 0.610 & 0.588 \\  
& 0.10 & 0.734 & 0.694 & 0.640 & 0.672 & 0.764 & 0.716 \\
& 0.15 & 0.810 & 0.790 & 0.744 & 0.792 & 0.782 & 0.744 \\
& 0.20 & 0.920 & 0.900 & 0.880 & 0.900 & 0.898 & 0.856 \\
\midrule

(5,3) & 0 & 0.558 & 0.576 & 0.382 & 0.412 & 0.592 & 0.580  \\ 
& 0.2 & 0.696 & 0.712 & 0.630 & 0.598 & 0.664 & 0.650 \\  
& 0.4 & 0.792 & 0.812 & 0.748 & 0.730 & 0.750 & 0.738 \\
& 0.6 & 0.908 & 0.894 & 0.870 & 0.818 & 0.868 & 0.842 \\
& 0.6 & 0.950 & 0.960 & 0.928 & 0.900 & 0.926 & 0.930 \\
\midrule

(5,7) & 0 & 0.596 & 0.614 & 0.446 & 0.462 & 0.626 & 0.612\\
& 0.05 & 0.738 & 0.732 & 0.554 & 0.592 & 0.648 & 0.640 \\  
& 0.10 & 0.818 & 0.846 & 0.728 & 0.746 & 0.794 & 0.754 \\
& 0.15 & 0.890 & 0.900 & 0.814 & 0.836 & 0.870 & 0.838 \\
& 0.20 & 0.922 & 0.946 & 0.862 & 0.916 & 0.916 & 0.906 \\
\midrule

(7,5) & 0 & 0.730 & 0.782 & 0.616 & 0.626 & 0.734 & 0.718 \\
& 0.05  & 0.814 & 0.826 & 0.736 & 0.760 & 0.810 & 0.796\\  
& 0.10 & 0.856 & 0.874 & 0.808 & 0.832 & 0.840 & 0.826 \\
& 0.15 & 0.924 & 0.944 & 0.838 & 0.896 & 0.874 & 0.872 \\
& 0.20 & 0.950 & 0.952 & 0.928 & 0.932 & 0.942 & 0.924 \\
\bottomrule
\end{tabular}
\end{table}

\begin{table}[!ht]
\centering
\caption{Win rates of WD method under different DGP ($N=100$) with inversions.}\label{ta4}
\begin{tabular}{cccccccccc}
\toprule
 & & \multicolumn{2}{c}{WD DGP} 
 & \multicolumn{2}{c}{COD DGP}
 & \multicolumn{2}{c}{TRUE DGP} \\
\cmidrule(lr){3-4}\cmidrule(lr){5-6}\cmidrule(lr){7-8}
$(n+1,m+1)$ & Prop inversions & RPS & Brier & RPS & Brier & RPS & Brier \\
\midrule

(3,3) & 0 & 0.530 & 0.522 & 0.430 & 0.428 & 0.504 & 0.506 \\ 
& 0.1 & 0.520 & 0.520 & 0.450 & 0.443 & 0.531 & 0.512 \\ 
& 0.2 & 0.524 & 0.519 & 0.518 & 0.487 & 0.551 & 0.522 \\  
& 0.3 & 0.557 & 0.541 & 0.544 & 0.535 & 0.546 & 0.540 \\
& 0.4 & 0.589 & 0.577 & 0.558 & 0.562 & 0.634 & 0.590\\

\midrule

(7,7) & 0 & 0.742 & 0.772 & 0.654 & 0.714 & 0.744 & 0.726 \\
& 0.1 & 0.752 & 0.780 & 0.692 & 0.724 & 0.762 & 0.740 \\  
& 0.2 & 0.782 & 0.818 & 0.770 & 0.764 & 0.794 & 0.776 \\  
& 0.3 &  0.840 & 0.870 & 0.804 & 0.812 & 0.838 & 0.816 \\
& 0.4 & 0.904 & 0.914 & 0.802 & 0.826 & 0.894 & 0.870 \\

\midrule

(5,5) & 0 & 0.579 & 0.583 & 0.480 & 0.480 & 0.574 & 0.572  \\ 
& 0.1 & 0.600 & 0.596 & 0.488 & 0.532 & 0.602 & 0.610 \\ 
& 0.2 & 0.616 & 0.638 & 0.550 & 0.538 & 0.644 & 0.642 \\  
& 0.3 & 0.664 & 0.694 & 0.588 & 0.612 & 0.730 & 0.700 \\
& 0.4 & 0.744 & 0.766 & 0.652 & 0.686 & 0.720 & 0.726 \\

\midrule

(3,5) & 0 & 0.538 & 0.546 & 0.526 & 0.524 & 0.598 & 0.568 \\
& 0.1 & 0.634 & 0.550 & 0.542 & 0.498 & 0.690 & 0.576 \\ 
& 0.2 & 0.830 & 0.636 & 0.742 & 0.600 & 0.818 & 0.674 \\  
& 0.3 & 0.946 & 0.770 & 0.914 & 0.784 & 0.924 & 0.762 \\
& 0.4 & 0.988 & 0.872 & 0.966 & 0.892 & 0.970 & 0.850 \\

\midrule

(5,3) & 0 & 0.558 & 0.576 & 0.382 & 0.412 & 0.592 & 0.580  \\ 
& 0.1 & 0.704 & 0.666 & 0.688 & 0.600 & 0.696 & 0.648 \\
& 0.2 & 0.864 & 0.806 & 0.832 & 0.726 & 0.828 & 0.752 \\ 
& 0.3 & 0.902 & 0.882 & 0.910 & 0.824 & 0.914 & 0.846 \\
& 0.4 & 0.982 & 0.962 & 0.974 & 0.916 & 0.974 & 0.912 \\

\midrule

(5,7) & 0 & 0.596 & 0.614 & 0.446 & 0.462 & 0.626 & 0.612\\
& 0.1 & 0.686 & 0.664 & 0.574 & 0.588 & 0.710 & 0.638 \\ 
& 0.2 & 0.816 & 0.762 & 0.726 & 0.672 & 0.848 & 0.736 \\  
& 0.3 & 0.934 & 0.868 & 0.878 & 0.794 & 0.924 & 0.806 \\
& 0.4 & 0.978 & 0.924 & 0.946 & 0.888 & 0.950 & 0.884 \\

\midrule

(7,5) & 0 & 0.730 & 0.782 & 0.616 & 0.626 & 0.734 & 0.718 \\
& 0.1  & 0.842 & 0.814 & 0.766 & 0.744 & 0.820 & 0.770\\  
& 0.2  & 0.896 & 0.852 & 0.844 & 0.788 & 0.882 & 0.828\\  
& 0.3 & 0.948 & 0.908 & 0.922 & 0.890 & 0.938 & 0.872 \\
& 0.4 & 0.986 & 0.956 & 0.958 & 0.942 & 0.982 & 0.932 \\

\bottomrule
\end{tabular}
\end{table}

\begin{table}[!ht]
\centering
\caption{Predictive performance of $n\mathcal{D}_{\textrm{res}}$, $\textrm{OPI}_{\preceq_{W}}$ and $R^2_W$, $N=500$, Setting 2, case I (standard dev. in parentheses) }\label{ta5}
\begin{tabular}{ccccc}
\toprule
$(n+1,m+1)$ & $R^2_W$ & $n\mathcal{D}_{\textrm{res}}$  & $\textrm{OPI}_{\preceq_{W}}$\\
\midrule

(3,3) 
 & 0.443 (0.019) & 0.075 (0.002) & 0.938 (0.004)  \\

\midrule

(7,7)
 & 0.281 (0.006) & 0.068 (0.002) & 0.925 (0.008)  \\

\midrule

(5,5)
 & 0.357 (0.012) & 0.071 (0.001) & 0.936 (0.006)  \\

\midrule

(3,5)
 & 0.441 (0.016) & 0.043 (0.001) &  0.938 (0.004)  \\

\midrule

(5,3)
 & 0.340 (0.027) & 0.069 (0.004) & 0.780 (0.012) \\

\midrule

(5,7)
 & 0.351 (0.010) & 0.049 (0.001) & 0.938 (0.007)  \\

\midrule

(7,5)
 & 0.286 (0.007) & 0.088 (0.002) & 0.915 (0.008) \\

\bottomrule
\end{tabular}
\end{table}

\begin{table}[!ht]
\centering
\caption{Predictive performance of $n\mathcal{D}_{\textrm{res}}$, $\textrm{OPI}_{\preceq_{W}}$ and $R^2_W$, $N=500$, Setting 2, case III (standard dev. in parentheses)}\label{ta6}
\begin{tabular}{ccccc}
\toprule
$(n+1,h+1,m+1)$ & $R^2_W$ & $n\mathcal{D}_{\textrm{res}}$ & $\textrm{OPI}_{\preceq_{W}}$  \\
\midrule

(5,5,5) 
 & 0.503 (0.017) & 0.041 (0.001) &  0.919 (0.005)  \\

\midrule

(4,4,4)
& 0.595 (0.010) & 0.043 (0.001) & 0.919 (0.004) \\

\midrule

(7,7,7)
& 0.440 (0.006) & 0.038 (0.001) & 0.931 (0.005) \\

\midrule

(3,5,7)
& 0.595 (0.020) & 0.021 (0.001) & 0.926 (0.019) \\

\midrule

(7,3,5)
& 0.440 (0.011) & 0.052 (0.001) & 0.903 (0.006)\\

\bottomrule
\end{tabular}
\end{table}



\subsection*{E. \  Fréchet mean for the 1-Wasserstein distance on the simplex}

As discussed in Section 3.6 of the paper, the baseline model for the $R^2_W$ index relies on the component-wise empirical median of the cumulative distribution functions. While the equivalence between the 1-Wasserstein distance and the $L_1$ norm of CDFs is a known property for one-dimensional measures \citep{Vallander1973, Santambrogio2015}, here we provide a rigorous constructive proof tailored to our discrete, weighted compositional framework.

\begin{proposition}\label{prop:frechet_median}
Let ${P}'_1, \dots, {P}'_N \in \Delta_m$ be  observed ordinal compositions. The empirical Fréchet mean with respect to the weighted 1-Wasserstein distance $W_{1,\mathbf{a}}$, defined as
$\bar{{P}}' = \underset{Q \in \Delta_m}{\arg\min} \sum_{i=1}^N W_{1,\mathbf{a}}(Q, {P}'_i)$,
is uniquely characterized (up to the standard non-uniqueness of the sample median for even $N$) by the composition whose cumulative distribution function is the pointwise median of the observed CDFs:
\begin{equation*}
\bar{F}_k = \mathrm{median}\left\{ F_k({P}'_1), \ldots, F_k({P}'_N) \right\}, \quad k\in\left\{1,\ldots,m\right\}.
\end{equation*}
\end{proposition}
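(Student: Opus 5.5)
The plan is to reparametrize the optimization problem through cumulative distribution functions, which turns it into a family of decoupled one-dimensional median problems. For $Q\in\Delta_m$ write $G_k=F_k(Q)$, $k=1,\dots,m$. The map $Q\mapsto (G_1,\dots,G_m)$ is a linear bijection from $\Delta_m$ onto the monotone set
\[
\mathcal{M}=\{(G_1,\dots,G_m)\in\mathbb{R}^m : 0\le G_1\le G_2\le\cdots\le G_m\le 1\}.
\]
The inverse is $q_1=G_1$, $q_k=G_k-G_{k-1}$ for $2\le k\le m$, and $q_{m+1}=1-G_m$. Nonnegativity of the $q_k$ is exactly the monotonicity constraint, and the sum is automatically $1$. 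By the definition \eqref{wass}, the objective becomes
\[
\Phi(G)=\sum_{i=1}^N W_{1,\mathbf{a}}(Q,P'_i)=\sum_{k=1}^m a_k \sum_{i=1}^N \left|G_k-F_k(P'_i)\right| ,
\]
so minimizing over $\Delta_m$ is equivalent to minimizing $\Phi$ over $\mathcal{M}$.

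The next step is to relax the constraint set. Minimize $\Phi$ over the whole box $[0,1]^m$, ignoring monotonicity. There $\Phi$ is a sum of functions of the separate coordinates $G_k$, each with positive weight $a_k$. The $k$-th term $\phi_k(G_k)=\sum_i |G_k-F_k(P'_i)|$ is minimized exactly at the sample medians of $\{F_k(P'_i)\}_i$. For odd $N$ this is a unique point. For even $N$ it is the interval between the two middle order statistics, which I will justify by the usual slope count of the piecewise-linear convex function. All the $F_k(P'_i)$ lie in $[0,1]$, so these medians lie in $[0,1]$ as well.

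The key step, and the only real obstacle, is to show that the unconstrained minimizer already lies in $\mathcal{M}$. Otherwise the coordinatewise median would not be feasible, and the relaxation would not solve the original problem. I will use that for each $i$ the sequence $k\mapsto F_k(P'_i)$ is nondecreasing, because $P'_i$ has nonnegative entries. Hence for $k<l$ we have $F_k(P'_i)\le F_l(P'_i)$ for every $i$, and the order statistics satisfy $F_{(r)}^{k}\le F_{(r)}^{l}$ for each rank $r$. The median is an order statistic (odd $N$), or is chosen consistently from the interval $[F_{(N/2)},F_{(N/2+1)}]$ (even $N$), so it is monotone in $k$.

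For even $N$ I will state the consistent selection explicitly. Take the lower median, the upper median, or the midpoint, uniformly in $k$; any of these gives a nondecreasing vector. More generally, any choice that respects monotonicity within the product of the median intervals is admissible, and this is the meaning of ``up to the standard non-uniqueness''.

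With feasibility established, $\bar F=(\bar F_1,\dots,\bar F_m)\in\mathcal{M}$ minimizes $\Phi$ over the larger set $[0,1]^m\supseteq\mathcal{M}$, so it minimizes over $\mathcal{M}$ as well. Conversely, any minimizer over $\mathcal{M}$ attains the same value, hence minimizes each $\phi_k$ separately, and so has coordinatewise medians. This gives the characterization. Finally, mapping $\bar F$ back through the inverse bijection produces the composition $\bar P'$. In the odd case it is unique because each $\phi_k$ has a unique minimizer and the bijection is injective.
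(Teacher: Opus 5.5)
Your proposal is correct and follows essentially the same route as the paper. Both arguments rewrite the objective as a separable, positively weighted sum of $L_1$ deviations of the CDF coordinates, minimize each coordinate at the sample median, and use the monotonicity of order statistics to show that the coordinatewise medians form a valid CDF, so the relaxed optimum is feasible. Your version also spells out the CDF bijection onto the monotone set, the consistent median selection for even $N$, and the converse that every minimizer has median coordinates, all of which the paper's proof only sketches.
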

\begin{proof}
By the definition of the weighted 1-Wasserstein distance on the ordered simplex, the objective function to minimize over $Q \in \Delta_m$ is the sum of absolute deviations between CDFs:
{\small
\begin{equation*}
\mathcal{L}(Q) = \sum_{i=1}^N W_{1,\mathbf{a}}(Q, {P}'_i) = \sum_{i=1}^N \sum_{k=1}^m a_k \left| F_k(Q) - F_k({P}'_i) \right|=\sum_{k=1}^m a_k \left( \sum_{i=1}^N \left| F_k(Q) - F_k({P}'_i) \right| \right).
\end{equation*}
}
Because the function is now strictly additive and separable across the $m$ cumulative dimensions, and the weights $a_k$ are strictly positive constants, the total sum $\mathcal{L}(Q)$ achieves its unconstrained global minimum if and only if each independent term $\sum_{i=1}^N \left| F_k(Q) - F_k({P}'_i) \right|$ is minimized marginally for each $k\in\left\{1,\ldots,m\right\}$. A well-known property of the $L_1$ norm is that a number  $x\in \mathbb{R}$ minimizing $\sum_{i=1}^N |x - c_i|$ from a set of real numbers $\{c_i\}$ is exactly their sample median. Therefore, the unconstrained independent minimizers are exactly $\bar{F}_k = \mathrm{median}( \{ F_k({P}'_i) \}_{i=1}^N )$. To  prove that this sequence defines a valid composition $\bar{{P}}' \in \Delta_m$, it must satisfy the monotonicity property of a CDF: $0 \le \bar{F}_1 \le \dots \le \bar{F}_m \le 1$. By the definition of the input compositions, for every observation $i\in\left\{1,\ldots,N\right\}$ we have $0 \le F_k({P}'_i) \le F_{k+1}({P}'_i) \le 1$. Because the sample median is a strictly order-preserving operator, applying it to both sides of the pointwise inequalities we get, for each $k\in\left\{1,\ldots,m\right\}$,
\begin{equation*}
\bar{F}_k=\mathrm{median}\left( \{ F_k({P}'_i) \}_{i=1}^N \right) \le \mathrm{median}\left( \{ F_{k+1}({P}'_i) \}_{i=1}^N \right) =\bar{F}_{k+1}.
\end{equation*}
Thus, the unconstrained pointwise optimum is exactly the constrained global minimum. The  Fréchet mean composition $\bar{{P}}'=(\bar{x}_{1},\ldots,\bar{x}_{m+1})$ is then trivially recovered by differencing, $   \bar{x}'_k = \bar{F}_k - \bar{F}_{k-1}$ 
(with the convention $\bar{F}_0=0$ and $\bar{F}_{m+1}=1$), completing the proof.
\end{proof}

\clearpage

\bibliographystyle{plainnat}
\bibliography{ref}

\end{document}